\documentclass[11pt]{article}
\usepackage{a4wide}
\usepackage{amsthm, amssymb}
\usepackage{amsmath}
\newtheorem{lemma}{Lemma}
\newtheorem{theorem}{Theorem}
\newtheorem{corollary}{Corollary}

\begin{document}
\title{Faster Separators for Shallow Minor-Free Graphs via Dynamic Approximate Distance Oracles}
\author{Christian Wulff-Nilsen~\footnote{Department of Computer Science, University of Copenhagen, koolooz@di.ku.dk, ,
                  \texttt{http://www.diku.dk/$_{\widetilde{~}}$koolooz/}.}}
\maketitle


\begin{abstract}
Plotkin, Rao, and Smith (SODA'$97$) showed that any graph with $m$ edges and $n$ vertices that excludes $K_h$ as a depth $O(\ell\log n)$-minor
has a separator of size $O(n/\ell + \ell h^2\log n)$ and that such a separator can be found in $O(mn/\ell)$ time. A time bound
of $O(m + n^{2+\epsilon}/\ell)$ for any constant $\epsilon > 0$ was later given (W., FOCS'$11$) which is an improvement for non-sparse graphs.
We give three new algorithms. The first has the same separator size and running time $O(\mbox{poly}(h)\ell m^{1+\epsilon})$. This is a significant
improvement for small $h$ and $\ell$. If $\ell = \Omega(n^{\epsilon'})$ for an arbitrarily small chosen constant $\epsilon' > 0$, we get a
time bound of $O(\mbox{poly}(h)\ell n^{1+\epsilon})$.
The second algorithm achieves the same separator size (with a slightly larger polynomial dependency on $h$) and running time
$O(\mbox{poly}(h)(\sqrt\ell n^{1+\epsilon} + n^{2+\epsilon}/\ell^{3/2}))$ when $\ell = \Omega(n^{\epsilon'})$.
Our third algorithm has running time $O(\mbox{poly}(h)\sqrt\ell n^{1+\epsilon})$ when $\ell = \Omega(n^{\epsilon'})$. It finds a separator of size
$O(n/\ell) + \tilde O(\mbox{poly}(h)\ell\sqrt n)$\footnote{We use $\tilde O$-notation instead of $O$-notation when
suppressing polylogarithmic factors.} which is no worse than previous bounds when $h$ is fixed and
$\ell = \tilde O(n^{1/4})$. A main tool in obtaining
our results is a novel application of a decremental approximate distance oracle of Roditty and Zwick.
\end{abstract}

\section{Introduction}
Given an undirected graph with a non-negative vertex weight function, a separator of $G$ is a subset of vertices whose removal
partitions $G$ into connected components none of which contain more than a $c$-fraction of the total vertex weight of the graph,
where $c < 1$ is a constant. Clearly, any graph contains a separator (the entire vertex set). The goal is to find separators of
small size.

A celebrated theorem of Lipton and Tarjan~\cite{SeparatorPlanar} states that every planar graph of size $n$ has an
$O(\sqrt n)$-size separator which can be found in linear time. Alon, Seymour, and Thomas~\cite{SeparatorHMinor} generalized this
to minor-free graphs.

In this paper we study the bigger class of shallow minor-free graphs. A depth $\ell$-minor of a graph is a minor where every vertex
corresponds to a contracted subgraph of radius at most $\ell$. This class was introduced by
Plotkin, Rao, and Smith~\cite{ShallowMinor}. They gave applications of these graphs to, e.g., geometry. Shallow minors are used to distinguish between somewhere and nowhere dense graphs~\cite{Sparsity}.

Plotkin, Rao, and Smith showed that an $n$-vertex graph excluding $K_h$ as a depth $O(\ell\log n)$-minor
has a separator of size $O(n/\ell + \ell h^2\log n)$. They gave an algorithm with $O(mn/\ell)$ running time which
either outputs a depth $O(\ell\log n)$-minor or a separator of size $O(n/\ell + h^2\ell\log n)$. For non-sparse graphs,
running time was improved to $O(n^{2+\epsilon}/\ell)$ for an arbitrarily small constant $\epsilon > 0$ with only a constant-factor
increase (depending on $\epsilon$) in the separator size~\cite{CWN}.

We give three new algorithms to find separators in shallow minor-free graphs. The first achieves the same separator size as
in~\cite{ShallowMinor} but with a running time of $O(h\ell m^{1+\epsilon} + h^2n\log n)$ for an arbitrarily small constant
$\epsilon > 0$. For small $h$ and $\ell$, this is near-linear time and a significant improvement over the near-quadratic time
in~\cite{CWN}. If $\ell = \Omega(n^{\epsilon'})$ for an arbitrarily small constant $\epsilon' > 0$, $m$ can be replaced by $n$ and $h$ can be replaced by $\mbox{poly}(h)$ in the time bound due to sparsity of the graph.
Our second algorithm achieves essentially the same separator size in time
$O(\mbox{poly}(h)(\sqrt\ell n^{1+\epsilon} + n^{2+\epsilon}/\ell^{3/2}))$ when $\ell = \Omega(n^{\epsilon'})$. The special case $\ell = \sqrt n$ gives the same time bound for
minor-free graphs as in~\cite{CWN}. Our third algorithm is the fastest, achieving a running time of
$O(\mbox{poly}(h)\sqrt\ell n^{1+\epsilon})$ when $\ell = \Omega(n^{\epsilon'})$. It finds a separator of size $O(n/\ell) + \tilde O(\mbox{poly}(h)\ell\sqrt n)$ which asymptotically is no
worse than the size achieved in~\cite{ShallowMinor} when $\ell = \tilde O(n^{1/4})$ and $h = O(1)$.
A main tool to achieve these new time bounds is a novel application of a dynamic approximate distance oracle of Roditty and
Zwick~\cite{DecDistOracle}. We believe this connection is interesting in itself and should further motivate the study of
dynamic distance oracles, an area which has only recently received attention from the research community.

The organization of the paper is as follows. We first give basic definitions, notation, and results in
Section~\ref{sec:Prelim}. Then we give a generic algorithm in Section~\ref{sec:Generic} which is quite similar to that of
Plotkin, Rao, and Smith. All our algorithms are implementations of this generic algorithms. Then our three algorithms are given
in Sections~\ref{sec:SepDistOracle},~\ref{sec:Spanner}, and~\ref{sec:SpannerDistOracle}, respectively. Finally, we conclude in
Section~\ref{sec:ConclRemarks}. Some details have been moved to the appendix.

\section{Preliminaries}\label{sec:Prelim}
We consider undirected graphs only. For a graph $G$, $V(G)$ resp.~$E(G)$ denotes the vertex set resp.~edge set of $G$ and for a subset $X$ of $V$, $G[X]$ is the subgraph of $G$ induced by $X$.
Connected components are referred to simply as \emph{components}. For vertices $u,v\in V$,
$d_G(u,v)$ denotes the shortest path distance between $u$ and $v$ in $G$. This definition is extended to
edge-weighted graps.

Consider a graph $G = (V,E,w:V\rightarrow\mathbb R)$ with a non-negative vertex weight function $w$. For any
subset $X$ of $V$, let $w(X) = \sum_{v\in X}w(v)$.
A \emph{separator} of $G$ is a subset $S$ of $V$ such that for each component $C$ of $G[V\setminus S]$
we have $w(C)\leq cw(V)$, for some constant $c < 1$.

Given graphs $G$ and $H$, $H$ is a \emph{minor} of $G$ if $H$ can be obtained from a subgraph of $G$ by
edge contractions. Otherwise, we say that
$G$ is \emph{$H$-minor-free} or that $G$ \emph{excludes $H$ as a minor}. If $H$ is a minor of $G$
such that the subgraphs of $G$ corresponding to vertices of $H$ have radius
at most $\ell$, $H$ is a \emph{minor of depth $\ell$}. If not, $G$ \emph{excludes $H$ as a minor of depth $\ell$}.
We only consider complete excluded minors, i.e., $H$ is of the form $K_h$. For our application, this is without loss
of generality since if $G$ excludes a minor $H$ (of some depth), it also excludes $K_h$ as a minor (of some depth), where $h = |V(H)|$; furthermore, our bounds only have small polynomial dependency on the size of the excluded minor.
For integers $h\geq 1$ and $\ell\geq 0$, let $\mathcal G_{h,\ell}$ denote the set of graphs that exclude $K_h$ as a minor of depth $\ell$. We have the following
simple lemma
\begin{lemma}\label{Lem:ShallowInclusions}
For any $h$, $\mathcal G_{h,0}\supseteq\mathcal G_{h,1}\supseteq\mathcal G_{h,2}\supseteq\ldots$.
\end{lemma}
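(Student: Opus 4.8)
The plan is to prove the chain of inclusions by establishing the single generic step $\mathcal G_{h,\ell+1}\subseteq\mathcal G_{h,\ell}$ for every integer $\ell\geq 0$; the full chain then follows by iterating. I would phrase this via the contrapositive at the level of minors rather than at the level of the classes: it suffices to show that if $G$ contains $K_h$ as a minor of depth $\ell$, then $G$ also contains $K_h$ as a minor of depth $\ell+1$.

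First I would unfold the definition. Assuming $G$ has $K_h$ as a depth-$\ell$ minor, there exist pairwise disjoint connected subgraphs $G_1,\dots,G_h$ of $G$ (the branch sets), each of radius at most $\ell$, such that for every pair $i\neq j$ there is an edge of $G$ with one endpoint in $V(G_i)$ and the other in $V(G_j)$; this is exactly the condition witnessing $K_h$ as a minor, together with the radius bound witnessing depth $\ell$. The key — and essentially only — observation is the trivial monotonicity of the radius bound: a subgraph of radius at most $\ell$ has radius at most $\ell+1$. Hence the very same collection $G_1,\dots,G_h$ witnesses $K_h$ as a minor of depth $\ell+1$ in $G$.

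Taking contrapositives, $G\in\mathcal G_{h,\ell+1}$ (i.e.\ $G$ excludes $K_h$ as a depth-$(\ell+1)$ minor) implies $G\in\mathcal G_{h,\ell}$, so $\mathcal G_{h,\ell+1}\subseteq\mathcal G_{h,\ell}$. Applying this for $\ell=0,1,2,\dots$ yields $\mathcal G_{h,0}\supseteq\mathcal G_{h,1}\supseteq\mathcal G_{h,2}\supseteq\cdots$, as claimed. There is no real obstacle here: the statement is a direct consequence of the fact that the depth parameter only constrains the branch sets via an upper bound on their radius, and relaxing that upper bound can only enlarge the family of admissible minors (hence shrink the family of graphs excluding them). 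If desired, one could also note explicitly that the same argument works verbatim for non-complete excluded minors $H$, but for the stated lemma the $K_h$ case is all that is needed. $\qquad\blacksquare$
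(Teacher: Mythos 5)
Your proof is correct: the monotonicity observation (a branch set of radius at most $\ell$ also has radius at most $\ell+1$, so any depth-$\ell$ minor is a depth-$(\ell+1)$ minor, and excluding at larger depth is the stronger condition) is exactly the immediate argument the paper has in mind when it states this as a ``simple lemma'' without proof. Nothing is missing, and your contrapositive phrasing matches the intended reasoning.
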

An essential tool in our algorithms is a decremental approximate distance oracle of Roditty and
Zwick which can report approximate distances for close vertex pairs~\cite{DecDistOracle}. The following lemma states the
performance of this oracle.
\begin{lemma}\label{Lem:DistOracle}
Let $k,d\in\mathbb N$ and let $G$ be a given graph with integer edge weights and with $m$ edges and $n$
vertices. There is a data structure
of size $O(m + n^{1 + 1/k})$ which can be maintained under edge deletions in $G$ in $O(dmn^{1/k})$ total expected time such that after each edge
deletion, for any vertices $u,v$ in $G$, an estimate $\tilde d_G(u,v)$ of the shortest path distance $d_G(u,v)$ can be reported in $O(k)$ time satisfying the
following: If $d_G(u,v)\leq d$ then $d_G(u,v)\leq\tilde d_G(u,v)\leq (2k-1)d_G(u,v)$ and if $d_G(u,v) > d$ then
$d_G(u,v)\leq\tilde d_G(u,v)$. A $uv$-path achieving this distance estimate can be reported in constant time per vertex, starting from either $u$
or $v$.
\end{lemma}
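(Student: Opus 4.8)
The plan is to obtain this as a decremental version of the static distance oracle of Thorup and Zwick, using Even--Shiloach trees to maintain the required shortest-path information under edge deletions. Recall the Thorup--Zwick construction: pick a hierarchy $V=A_0\supseteq A_1\supseteq\cdots\supseteq A_{k-1}\supseteq A_k=\emptyset$, where $A_i$ keeps each vertex of $A_{i-1}$ independently with probability $n^{-1/k}$; for each vertex $v$ and level $i$ let $p_i(v)$ be a nearest vertex of $A_i$, $\delta_i(v)=d_G(v,p_i(v))$, and let the bunch $B(v)=\bigcup_{i=0}^{k-1}\{w\in A_i\setminus A_{i+1}:d_G(v,w)<\delta_{i+1}(v)\}$, stored together with the distances $d_G(v,w)$. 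The query for $u,v$ alternates between $u$ and $v$, incrementing the level $i$ and taking the current center $w=p_i(\cdot)$, until $w$ lies in the bunch of the opposite endpoint, and then returns $d_G(u,w)+d_G(w,v)$; this is the standard $(2k-1)$-approximation with $O(k)$ query time, and every distance it consults is at most $k\cdot d_G(u,v)$.

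To make this decremental, maintain all needed distances by Even--Shiloach trees truncated at a common depth $D=\Theta(kd)$. For each level $i$, attach a super-source adjacent to all of $A_i$ and maintain a truncated Even--Shiloach tree from it; this yields $p_i(v),\delta_i(v)$ for every $v$ (reported as undefined once they exceed $D$) in $O(Dm)$ total time and $O(m)$ space, sharing adjacency lists across levels. For the bunches, maintain the clusters $C(w)=\{v:w\in B(v)\}$: for $w\in A_i\setminus A_{i+1}$, run a modified truncated Even--Shiloach tree rooted at $w$ that expands from a vertex $v$ only while $d_G(w,v)<\delta_{i+1}(v)$, reading $\delta_{i+1}(v)$ off the next-level tree and keeping at each $v$ a back-pointer list of the clusters containing it (this list is precisely $B(v)$). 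Over the whole deletion sequence, $C(w)$'s tree costs $O(D\cdot\mathrm{vol}(C(w)))$, so the total is $O\big(D\sum_w\mathrm{vol}(C(w))\big)=O\big(D\sum_v\deg(v)|B(v)|\big)$; since $\mathbf{E}[|B(v)|]=O(kn^{1/k})$ by the Thorup--Zwick bunch-size bound, $\mathbf{E}\big[\sum_w\mathrm{vol}(C(w))\big]=O(kmn^{1/k})$ and $\mathbf{E}\big[\sum_w|C(w)|\big]=O(kn^{1+1/k})$. Folding the $\mathrm{poly}(k)$ factors into the stated bounds gives expected update time $O(dmn^{1/k})$ and space $O(m+n^{1+1/k})$. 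Query and path reporting are then as in Thorup--Zwick --- follow parent pointers in $w$'s tree from $w$ to each of $u,v$ at $O(1)$ per edge --- returning $\infty$ when the walk cannot be assembled from the stored data.

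For correctness, any finite value returned is the length of a genuine $u$--$v$ walk in the current graph, hence is $\ge d_G(u,v)$; this gives the lower-bound half in both cases and the whole promise when $d_G(u,v)>d$. When $d_G(u,v)\le d$, every pivot distance and bunch distance consulted by the Thorup--Zwick query is at most $k\cdot d_G(u,v)\le kd\le D$, so nothing relevant is pruned, the decremental structure behaves on this pair exactly like the static oracle on the current graph, and the estimate is $\le(2k-1)d_G(u,v)$. The step I expect to be the main obstacle is the decremental bookkeeping for the clusters: both $d_G(w,\cdot)$ and $\delta_{i+1}(\cdot)$ are monotone non-decreasing under deletions, but the predicate $d_G(w,v)<\delta_{i+1}(v)$ is not, so a vertex may leave $C(w)$ and later re-enter it, and one must argue (as Roditty and Zwick do) that the Even--Shiloach work remains charged to non-decreasing distance labels and that $D$ is simultaneously large enough for the query guarantee and small enough to keep the claimed running time. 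Getting this interplay of the stretch $2k-1$, the truncation depth $D$, and the ``distances up to $d$'' promise exactly right is the real content; the rest is standard Thorup--Zwick and Even--Shiloach machinery.
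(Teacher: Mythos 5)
First, note how the paper treats this statement: it is imported wholesale from Roditty and Zwick \cite{DecDistOracle}, and the only thing the paper itself argues is the final sentence (path reporting in constant time per vertex), which it gets from the observation that the oracle maintains Thorup--Zwick-style cluster trees whose parent pointers can be followed. Your proposal instead tries to re-derive the cited theorem from scratch, and its architecture is indeed the right one --- Thorup--Zwick hierarchy $A_0\supseteq\cdots\supseteq A_k$, truncated Even--Shiloach trees at depth $D=\Theta(kd)$ for pivots and for clusters $C(w)$, and the standard stretch argument that when $d_G(u,v)\le d$ every distance the query touches is at most $kd\le D$, so the truncation is invisible and the static $(2k-1)$ analysis applies; your parent-pointer remark also covers the path-reporting addendum in essentially the paper's way.

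However, as a proof the proposal has a genuine gap, and it is exactly the one you flag yourself. The bound ``$C(w)$'s tree costs $O(D\cdot\mathrm{vol}(C(w)))$, so the total is $O(D\sum_v\deg(v)|B(v)|)$ with $\mathbf E[|B(v)|]=O(kn^{1/k})$'' silently treats cluster membership as if it were monotone: the Even--Shiloach charging scheme pays $O(D)$ per vertex per \emph{maximal interval of membership}, while the Thorup--Zwick bunch-size bound is a per-instant bound. Since the predicate $d_G(w,v)<\delta_{i+1}(v)$ can flip back on (when $\delta_{i+1}(v)$ jumps past an unchanged $d_G(w,v)$), a vertex can re-enter $C(w)$ many times, and nothing in your write-up bounds the total volume of work summed over all such re-entries; controlling this (via the randomness of the sampling hierarchy and a careful accounting of cluster re-initializations) is precisely the technical core of Roditty--Zwick's analysis, which you defer to rather than reproduce. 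So either the lemma should simply be cited, as the paper does, with only the path-reporting observation argued, or the re-entry accounting must actually be carried out; in its current form the running-time claim $O(dmn^{1/k})$ is asserted, not proved.
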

The last part of the lemma is not stated in~\cite{DecDistOracle} but it follows immediately from the observation that the oracle
maintains cluster structures (see definition
in~\cite{ThorupZwick}) and paths can be traversed in constant time per vertex using these.

\section{A Generic Algorithm}\label{sec:Generic}
All of our algorithms are implementations of the same generic algorithm (except the algorithm 
in Section~\ref{sec:Spanner} which implements a slight variant) and we refer to it as
$\texttt{genericalg}(G = (V,E,w:V\rightarrow\mathbb R), h, \ell)$; see pseudocode in Figure~\ref{fig:genericalg}. It is similar to that of Plotkin, Rao,
and Smith~\cite{ShallowMinor} but with some subtle differences that we
come back to later. We refer to~\cite{ShallowMinor} for a less compact description.
\begin{figure}[!ht]
\begin{tabbing}
d\=dd\=\quad\=\quad\=\quad\=\quad\=\quad\=\quad\=\quad\=\quad\=\quad\=\quad\=\quad\=\kill
\>\texttt{Algorithm} $\texttt{genericalg}(G = (V,E,w:V\rightarrow\mathbb R), h, \ell)$\\\\
\>1. \>\>initialize $V' = V$, $M = \mathcal M = V_r = B = A = \emptyset$, and $\rho = 2\lceil\ell\ln n\rceil$\\
\>2. \>\>while there is a component $X$ of $G[V']$ with $w(X) > \frac 2 3 w(V)$ and $|\mathcal M| < h$\\
\>3.\>\>\>update $V'$ to $X$ and move any remaining components to $V_r$\\
\>4. \>\>\>for each $T\in\mathcal M$ that is not incident to $V'$ in $G$\\
\>5. \>\>\>\>$\mathcal M\leftarrow\mathcal M\setminus\{T\}$, $M\leftarrow M\setminus V(T)$, $V_r\leftarrow V_r\cup V(T)$, $A\leftarrow A\cup V(T)$\\
\>6. \>\>\>let $u$ be any vertex in $V'$\\
\>7. \>\>\>if a tree $T$ is identified in $G[V\setminus (M\cup A)]$ rooted at $u$, with radius $\leq C\rho$,\\
\>   \>\>\>(for some constant $C\geq 1$) and incident in $G$ to all trees in $\mathcal M$\\
\>8. \>\>\>\>update $\mathcal M\leftarrow\mathcal M\cup\{T\}$, $M\leftarrow M\cup V(T)$, and $V'\leftarrow V'\setminus V(T)$\\
\>9. \>\>\>else // it is assumed that $v$ can be found in line $10$\\
\>10.\>\>\>\>pick a $v\in V'$ which is incident in $G$ to $M$ such that $d_{G[V']}(u,v) > \rho$\\
\>11.\>\>\>\>letting $w$ be any of $u$ and $v$, start growing a BFS tree in $G[V']$ from $w$\\
\>12.\>\>\>\>for each BFS layer $N$ explored\\
\>13.\>\>\>\>\>let $S$ be the set of vertices explored in all layers so far\\
\>14.\>\>\>\>\>let $S'$ be the set of smaller vertex weight among $S$ and $V'\setminus (S\setminus N)$\\
\>15.\>\>\>\>\>if $|N|\leq |S'|/\ell$\\
\>16 \>\>\>\>\>\>update $B\leftarrow B\cup N$, $V_r\leftarrow V_r\cup S'\setminus N$, and $V'\leftarrow V'\setminus S'$\\
\>17 \>\>\>\>\>\>terminate the BFS\\
\>18.\>\>if $|\mathcal M| = h$ then output $\mathcal M$ as a $K_h$-minor of $G$ of depth $O(\ell\log n)$\\
\>19.\>\>else output $M\cup B$ as a separator of $G$\\
\end{tabbing}
\caption{A generic algorithm for either finding a separator or reporting $K_h$ as a depth $O(\ell\log n)$-minor in a
         vertex-weighted graph $G = (V,E,w)$ where $n = |V|$.}\label{fig:genericalg}
\end{figure}
Subsets $V'$, $M$, $V_r$, $B$, and $A$ of $V$ are maintained where initially $V' = V$ and
$M = V_r = B = A = \emptyset$~\footnote{Here
we use a similar naming convention as in~\cite{ShallowMinor}.}. The algorithm attempts
to form a separator of $G$ in a number of iterations of the while-loop in lines $2$--$17$. Set $M$ is the union of vertex sets of trees and
$\mathcal M$ denotes the
set of these trees. At any given time, the size $p$ of $\mathcal M$ is at most $h$ and the trees of $\mathcal M$ form the minor
$K_p$ of depth $O(\ell\log n)$ in $G$. At the beginning of each iteration, trees that are no longer incident to $V'$ in $G$ are removed
from $\mathcal M$ (lines $4$--$5$). Set $A$ is monotonically growing and consists at any given time of the vertices that
previously belonged to $M$ but have since been removed from this set. Set $V'$ is a monotonically shrinking set and consists of vertices
yet to be processed. The algorithm terminates if each component of $G[V']$ has vertex weight at most $\frac 2 3 w(V)$.

In the following, let $\rho = 2\lceil\ell\ln n\rceil$. In each iteration, the algorithm attempts to find a tree $T$ in $G[V\setminus A]$ which is rooted at an
arbitrary $u\in V'$, has radius at most $C\rho$ for a constant $C\geq 1$, and is incident in $G$ to all trees in
$\mathcal M$. If such a tree is found, the trees in $\mathcal M\cup\{T\}$ form the minor $K_{p+1}$ of depth $O(\ell\log n)$ in $G$. In this case,
$T$ is added to $\mathcal M$. If $p+1 = h$, the algorithm outputs the $p+1$ trees found as they constitute a certificate that $G$ contains $K_h$ as
a minor of depth $O(\ell\log n)$.

Suppose such a tree $T$ could not be found. Then it is assumed in line $10$ that a vertex $v\in V'$ incident in $G$ to $M$ exists such that $d_{G[V']}(u,v) > \rho$.
(this assumption holds for the algorithm in~\cite{ShallowMinor} with $C = 1$ since in this case there is a tree $T'\in\mathcal M$ such that for any $v\in V'$ incident to $T'$ in $G$,
$d_{G[V\setminus A]}(u,v) > C\rho$.) Then either $u$ or $v$ is
picked; call it $w$. A BFS tree is grown from $w$ in $G[V']$ until some layer $N$ is small compared to $S$ and $V'\setminus (S\setminus N)$, where $S$
is the set of vertices explored so far. More precisely, the process stops if $|N|\leq |S'|/\ell$, where $S'$ is the set of smaller vertex weight among $S$ and
$V'\setminus S$. Using the same analysis as in~\cite{ShallowMinor}, such a layer $N$ will eventually be found. For assume otherwise. Since $d_{G[V']}(u,v) > \rho$, more than $\rho$
layers are explored by the BFS.
For each layer explored, either $|S|$ grows by at least a factor of $1 + 1/\ell$ or $|V'\setminus (S\setminus N)|$ is reduced by at least a factor of $1 + 1/\ell$.
Since $(1 + 1/\ell)^{\rho/2}\geq n$, after $\rho$ layers, either $S$ contains at least $n$ vertices or $V'\setminus (S\setminus N)$ contains less than $1$ vertex,
which is a contradiction.



\paragraph{Correctness:}
To prove correctness of the generic algorithm (under the assumption stated in line $9$), note that the output in line $18$ is indeed a $K_h$-minor of depth $O(\ell\log n)$.
We need to show that in line $19$, $M\cup B$ is a separator. It is easy to see that $V'$ is disjoint from $V_r\cup M\cup B$ during the algorithm.
Consider the final iteration of the while-loop. At the beginning of line $7$,
$w(V') > \frac 2 3 w(V)$ so $w(V_r) < w(V)/3$. If line $8$ is executed then all vertices that are removed from
$V'$ are added to the set output in line $19$. If lines $10$--$17$ are executed then let $W' > \frac 2 3 w(V)$ be the vertex
weight of $V'$ just before executing line $10$ and let
$W_r$ be the vertex weight of $V_r$ just after executing line $17$. By definition of $S'$, $W_r\leq (w(V) - W') + W'/2 = w(V) - W'/2 < \frac 2 3 w(V)$.
Since this is the final iteration, we also have that each component of $G[V']$ has vertex weight less than
$\frac 2 3 w(V)$ at the end of the iteration.
Hence, $M\cup B$ is a separator of $G$ in line $19$.



The main difference between the above generic algorithm and that in~\cite{ShallowMinor} is that the former searches for a tree in $G[V\setminus (M\cup A)]$ and
allows $M$ to overlap with $B$ and $V_r$ whereas the latter does not maintain $A$ and instead searches for a tree in $G[V']$, ensuring that
$(V',M,V_r,B)$ is a partition of $V$. This difference will be important for our algorithm in
Section~\ref{sec:SpannerDistOracle}. Unlike~\cite{ShallowMinor}, we also allow constant $C$ in line $7$ in order to facilitate the use of approximate distances.

It is easy to see that if each tree added to $\mathcal M$ in line $5$ of $\texttt{genericalg}(G = (V,E), h, \ell)$ contains at most $s$ vertices and if a separator is output in line $19$, its size is $O(hs + n/\ell)$.
Note that if a tree $T$ exists in line $7$ of $\texttt{genericalg}(G = (V,E), h, \ell)$, we may pick it such that its size is
$O(h\ell\log n)$. This then gives a separator of size $O(h^2\ell\log n + n/\ell)$, matching that in~\cite{ShallowMinor}.
For our algorithm in Section~\ref{sec:SpannerDistOracle}, we will need to allow a bigger size of each tree $T$. This will increase the
size of the separator but only for large values of $\ell$.

\section{A Fast Separator Theorem via Distance Oracles}\label{sec:SepDistOracle}
For the description of our first algorithm, it will prove useful to regard $\mathcal M$ as consisting of exactly $h-1$ trees at any
given time; this is done by permitting trees with empty vertex sets.
Each tree of $\mathcal M$ is assigned a unique index between $1$ and $h-1$ and we let $T_i$ denote the tree with index $i$. From the moment a tree is
added to $\mathcal M$ until it leaves $\mathcal M$ or
the algorithm terminates, the index of that tree is fixed. We refer to the trees of $\mathcal M$ with non-empty vertex sets as \emph{proper} trees. Each
of them will have diameter $O(\ell\log n)$ and size $O(\ell h\log n)$. Thus, the size of the separator output in line $19$
is $O(h^2\ell \log n + n/\ell)$.

For $1\leq i\leq h-1$, we maintain a decremental approximate distance oracle $\mathcal D_i$ satisfying the conditions in Lemma~\ref{Lem:DistOracle}
with $k = \lceil 1/\epsilon\rceil$ and $d = 4k\rho$. Denote by $V_i'$ the set $(V\setminus (M\cup A))\cup V(T_i)$. At any point, $\mathcal D_i$ is an
approximate distance oracle for the graph with vertex set $V$ and containing all edges of $G[V_i']$ except those with both endpoints in $T_i$ that do
not belong to $T_i$. We denote this graph by $G_i' = (V,E_i')$. Note that the vertex set is fixed to $V$ since $\mathcal D_i$ does not support vertex deletions.

\subsection{Identifying a tree of small diameter}
Now, consider an iteration of the while-loop of $\texttt{genericalg}$. To determine whether to execute line $8$ or lines $10$--$17$, we do as follows.
Assume that at least one tree in
$\mathcal M$ is proper (otherwise, a tree satisfying the conditions in line $7$ can easily be found) and let $i_{\min}$ be the smallest index of
such a tree. Pick $u$ as any vertex of $V'$ incident in $G$ to $T_{i_{\min}}$. For each proper $T_i\in \mathcal M$ with $i > i_{\min}$ and for each $v\in V(T_i)$, we query $\mathcal D_i$ for
a distance estimate from $u$ to $v$ in $G_i'$. Let $v_i\in V(T_i)$ be a vertex $v$ achieving a minimum estimate and denote this estimate by $\tilde\delta_i$.
In the following, we consider the following two possible cases:
\begin{enumerate}
\item $\tilde\delta_i\leq d$ for all $i > i_{\min}$ for which $T_i$ is proper,
\item $\tilde\delta_i > d$ for some $i > i_{\min}$ for which $T_i$ is proper.
\end{enumerate}

We will show how a tree satisfying the conditions in line $7$ can be found if we are in case $1$ and that a vertex $v$ satisfying the conditions in line $10$
can be found if we are in case $2$.

Assume case $1$ first. We traverse the approximate $uv_i$-path from $u$ until reaching
the first vertex which is incident in $G$ to $T_i$. Note that the subpath traversed is contained in $G[V\setminus (M\cup A)]$.
By Lemma~\ref{Lem:DistOracle}, each traversal takes $O(d)$ time and since each approximate
distance query can be done in $O(1)$ time, $v_i$ is found in $O(|V(T_i)|) = O(hd)$ time.
Summing over all $i$, we obtain in $O(h^2d)$ time a tree $T$ in $G[V\setminus(M\cup A)]$ rooted at $u$ of radius at most $d = 4k\rho$ and of size $O(hd)$
which is incident in $G$ to each proper tree of $\mathcal M$ ($T$ is in the union of the paths traversed). This tree satisfies the conditions in line $7$
with $C = 4k$.
We may assume that $T$ contains at least $d$ vertices; if not, grow it by adding additional incident vertices to it from $V\setminus(M\cup A)$ while keeping the radius at most $d$.

Having found $T$, we check if there is an index available for $T$ in $\mathcal M$, i.e., an index $i$ between $1$ and $h-1$ such that $T_i$ is not proper. If there is
no such index, $G$ has
$K_h$ as a minor of depth $O(\ell\log n)$ and the algorithm outputs $T$ and the collection of proper trees in $\mathcal M$ as a certificate of this.
Otherwise, $T_i$ is set to $T$, thereby adding it to $\mathcal M$ and making it proper, and $V(T)$ is removed from $V'$ and added to $M$.
This update may cause some other proper trees $T_j\in \mathcal M$ to no longer be incident to $V'$ in $G$ and each of these
trees need to be moved from $\mathcal M$ to $V_r$ in the next iteration (if any). For this to happen for such a tree $T_j$, each edge of $G_j'$ incident to $T_j$ must either belong to $T_j$
or be incident to the set $V''$ of vertices that will be removed from $V'$ in line $3$ of the next iteration. Note that
since $V'$ is a monotonically decreasing set, we can afford to visit the set $E''$ of edges of $G$ which are incident to $V''$ and mark them. Now for each proper $T_j$,
$j\neq i$, we visit all its vertices and check if it has any incident non-tree edges in $E_j'$ which are not marked; this takes
$O(|T_j| + |E''|) = O(hd + |E''|)$ time. If there is no such vertex, $V(T_j)$ is moved from $M$ to $V_r$ and $T_j$ is updated to be an empty tree in $\mathcal M$.
Over all $j$, this takes $O(h^2d + h|E''|)$ time. The latter term is paid for by initially putting $h$ credits on each edge of $E$.

To update the approximate distance oracles accordingly, consider first moving $V(T) = V(T_i)$ from $V'$ to $M$. This amounts to deleting from $G_j'$ all
edges incident to $V(T_i)$, for each $j\neq i$.
From $\mathcal D_i$, we need to delete edges with both endpoints in $T_i$ that do not belong to $T_i$. We identify these by simply visiting
all edges incident to $T_i$ since this can be paid for by the removal of $V(T_i)$ from $V'$.
For a tree $T_j$ whose vertex set needs to be moved from $M$ to $V_r$, we update $\mathcal D_j$ by deleting all edges incident to $V(T_j)$.

By Lemma~\ref{Lem:DistOracle}, total time to maintain all $h-1$ distance oracles is $O(hdmn^{1/k}) = O(h\ell mn^{\epsilon}\log n)$.
Since each tree removed from $V'$ has size $\Omega(d)$, total additional time for the above is
$O(hm + (n/d)h^2d) = O(hm + h^2n)$.

\subsection{Identifying a small separator for distant vertex pairs}
The following lemma considers case $2$ above.
\begin{lemma}\label{Lem:ShortLongPath}
With the above definitions, suppose $\tilde\delta_i > d$ for some $i > i_{\min}$ where $T_i$ is proper. Then the shortest path distance in $G[V']$ from $u$ to any vertex of $V'$
incident in $G$ to $T_i$ is at least $d/(2k-1) - 1$.
\end{lemma}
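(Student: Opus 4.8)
The proof is essentially a contrapositive argument using the distance-oracle guarantees.

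First, I would recall what the estimate $\tilde\delta_i$ measures: it is the minimum over $v\in V(T_i)$ of the oracle estimate $\tilde d_{G_i'}(u,v)$, where $\mathcal D_i$ is maintained on $G_i'$. By construction, $G_i'$ is obtained from $G$ by keeping exactly the edges of $G[V_i']$ except the non-tree edges inside $T_i$, where $V_i' = (V\setminus(M\cup A))\cup V(T_i)$; in particular, since $V'\subseteq V\setminus(M\cup A)$, every edge of $G[V']$ survives in $G_i'$, so $d_{G_i'}(u,v)\le d_{G[V']}(u,v)$ for all $u,v\in V'$. This inclusion is the first key step.

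Second, I would argue by contradiction. Suppose some vertex $v\in V'$ incident in $G$ to $T_i$ satisfies $d_{G[V']}(u,v) < d/(2k-1) - 1$. Since $v$ is incident to $T_i$, there is an edge from $v$ to some $v'\in V(T_i)$, and that edge lies in $G_i'$ (it is a tree-incident edge, not a non-tree edge interior to $T_i$). Hence $d_{G_i'}(u,v') \le d_{G[V']}(u,v) + 1 < d/(2k-1)$. So the true distance from $u$ to $v'$ in $G_i'$ is strictly less than $d/(2k-1)$, in particular strictly less than $d$. Now invoke the accuracy guarantee of Lemma~\ref{Lem:DistOracle}: because $d_{G_i'}(u,v') < d$, the estimate satisfies $\tilde d_{G_i'}(u,v') \le (2k-1)d_{G_i'}(u,v') < d$. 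Since $\tilde\delta_i$ is the minimum estimate over all vertices of $T_i$, we get $\tilde\delta_i \le \tilde d_{G_i'}(u,v') < d$, contradicting the hypothesis $\tilde\delta_i > d$.

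The only subtle point — and the one I'd be most careful about — is verifying the edge-inclusion claims precisely: that the entire shortest $uv$-path in $G[V']$ survives in $G_i'$ (needs $V'\cap(M\cup A)=\emptyset$, which the algorithm maintains) and that the single bridging edge $vv'$ into $T_i$ is present in $G_i'$ (needs that it is not one of the deleted non-tree edges with both endpoints in $T_i$ — true since $v\notin V(T_i)$). Once those are nailed down, the $(2k-1)$-stretch bound applies directly and the arithmetic $d/(2k-1) - 1$ falls out from the one-edge detour plus the multiplicative stretch. I would also note the off-by-one: the bound is stated as $\ge d/(2k-1) - 1$ rather than $> d/(2k-1) - 1$, so I'd phrase the contradiction hypothesis as a strict inequality as above and conclude the non-strict bound.
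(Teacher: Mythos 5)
Your proof is correct and takes essentially the same route as the paper's: argue by contradiction, use that $G[V']$ is a subgraph of $G_i'$ together with the bridging edge into $T_i$ to get a true $G_i'$-distance below $d/(2k-1)\leq d$, and then apply the $(2k-1)$-stretch guarantee of Lemma~\ref{Lem:DistOracle} to contradict $\tilde\delta_i > d$. The only cosmetic difference is that the paper phrases the final step via the contrapositive of the oracle guarantee (concluding $d_{G_i'}(u,v_i) > d$) rather than bounding the estimate directly, and you spell out the edge-inclusion checks that the paper leaves implicit.
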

\begin{proof}
Let $w_i$ be a vertex of $V'$ incident in $G$ to a vertex $v_i$ in $T_i$ and assume for contradiction that $d_{G[V']}(u,w_i) < d/(2k-1) - 1$.
We have
\[
  d_{G_i'}(u,v_i)\leq d_{G_i'}(u,w_i) + 1\leq d_{G[V']}(u,w_i) + 1 < d/(2k-1)
\]
so $\tilde\delta_i > d > (2k-1)d_{G_i'}(u,v_i)$. By Lemma~\ref{Lem:DistOracle}, $d_{G_i'}(u,v_i) > d$, contradicting that $d_{G_i'}(u,v_i) < d/(2k-1)\leq d$.
\end{proof}
Let $i$ be an index satisfying Lemma~\ref{Lem:ShortLongPath}. The lemma shows that if we are in case $2$ above, the condition in line $10$ holds since
for any vertex $v\in V'$ incident in $G$ to $T_i$, $d_{G[V']}(u,v) > d/(2k-1) - 1 = 4k\rho/(2k-1) - 1 > \rho$. Such a vertex $v$ can be found in $O(|T_i|) = O(hd)$ time.
We now describe how to efficiently execute lines $10$--$17$. Since we are free to perform a BFS from $u$ or from $v$, we
perform both of these searches in parallel. More precisely, one edge is visited by the BFS from $u$, then one edge
by the BFS from $v$, and so on. When the condition in line $15$ is satisfied for either of them, both searches terminate.



To analyze the running time, observe that when the BFS procedures terminate, each has visited at most $\rho - 1 = d/(4k) - 1$ layers. By Lemma~\ref{Lem:ShortLongPath},
$E_u\cap E_v = \emptyset$ where $E_u$ resp.~$E_v$ denotes the set of edges explored by the BFS from $u$ resp.~$v$.
Let $U$ be the vertex set moved from $V'$ to $V_r\cup B$ after the BFS procedures from $u$ and $v$ terminate
and let $t$ denote the BFS time for exploring $E_u$ and $E_v$. Since
$|E_u| = |E_v|\pm 1$, we have that $t = O(|E_u|)$ as well as $t = O(|E_v|)$. Since either
$E_u\subseteq E(G[U])$ or $E_v\subseteq E(G[U])$, we get $t = O(|E(G[U])|)$. Hence, the removal of $U$ from $V'$ can pay for $t$ so total BFS time over all iterations is $O(m)$.

As for case $1$ above, we move a tree $T_j$ from $M$ to $V_r$ if $T_j$ is no longer incident in $G$ to $V'$ in line $3$ of
the next iteration. Letting $V''$ and $E''$ be defined as above, we mark all
edges of $G$ incident to $V''$ and then for each proper $T_j$ check in $O(|T_j|)$ time whether $G_j'$ has any non-marked edges incident to $T_j$ that do not belong to $T_j$. This takes $O(|E''| + h^2d)$ time over all $j$. Since $V''$ has size $\Omega(\ell)$, this is $O(m + (n/\ell)h^2d) = O(m + n\log nh^2)$
over all iterations.

We can now conclude this section with our first main result.
\begin{theorem}\label{Thm:Algo1}
Given a graph $G$ with $m$ edges and $n$ vertices and given $h,\ell\in\mathbb N$ and $\epsilon > 0$, there is an algorithm with
$O(h\ell m^{1+\epsilon} + h^2n\log n)$ running time which either
gives a certificate of $K_h$ as a depth $O(\ell\log n)$-minor in $G$ or outputs a separator of $G$ of size $O(h^2\ell\log n + n/\ell)$.
\end{theorem}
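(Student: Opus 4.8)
The plan is to instantiate \texttt{genericalg}$(G,h,\ell)$ with the distance-oracle based implementation of lines~$7$--$17$ developed in this section and to verify three things: (a) this implementation always satisfies the assumption stated in line~$9$, so that the correctness argument already given for the generic algorithm applies unchanged; (b) the output separator has size $O(h^2\ell\log n + n/\ell)$; and (c) the total running time is $O(h\ell m^{1+\epsilon} + h^2 n\log n)$. For (a), note that in any iteration we are in case~$1$ or case~$2$. In case~$1$, traversing the approximate $uv_i$-paths produces in $O(h^2 d)$ time a tree $T\subseteq G[V\setminus(M\cup A)]$ rooted at $u$, of radius $\le d = 4k\rho$, incident in $G$ to every proper tree of $\mathcal M$; after padding $T$ to at least $d$ vertices this is exactly a tree as required in line~$7$ with $C = 4k$, so line~$8$ (or the output in line~$18$) is executed legitimately. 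In case~$2$, Lemma~\ref{Lem:ShortLongPath} together with the choice $d = 4k\rho$ yields a vertex $v\in V'$ that is incident in $G$ to $M$ and satisfies $d_{G[V']}(u,v) > d/(2k-1)-1 > \rho$, which is precisely the vertex demanded in line~$10$. Hence the assumption of line~$9$ never fails and, by the correctness proof of the generic algorithm, the algorithm outputs either a $K_h$-minor of $G$ of depth $O(\ell\log n)$ or a separator $M\cup B$ of $G$.

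For (b), recall that we maintain $\mathcal M$ as $h-1$ possibly-empty trees; each tree becomes proper only when set to some $T$ as above, at which point it has size $O(hd) = O(h\ell\log n)$ and never grows afterwards, so $|M| = O(h^2\ell\log n)$ throughout. The layers accumulated in $B$ obey $|B| = O(n/\ell)$ by the standard charging: whenever a layer $N$ is added to $B$ we have $|N|\le|S'|/\ell$ and the sets $S'$ removed from the monotonically shrinking $V'$ are disjoint. Thus $|M\cup B| = O(h^2\ell\log n + n/\ell)$, and on the $K_h$-minor branch the reported trees certify a depth-$O(\ell\log n)$ minor.

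For (c), the dominant term is the maintenance of the $h-1$ decremental oracles: by Lemma~\ref{Lem:DistOracle}, with $d = 4k\rho = O(\ell\log n)$, this costs $O(h\,d\,m\,n^{1/k})$ total expected time. Taking $k = \lceil 2/\epsilon\rceil$ (so that $C = 4k$, $d$, and all suppressed constants depend only on $\epsilon$) makes $n^{1/k}\log n = O(n^{\epsilon}) = O(m^{\epsilon})$ -- using $m = \Omega(n)$, without loss of generality -- so this term is $O(h\ell m^{1+\epsilon})$. Every remaining cost is charged either against edges of $G$, giving each an $O(h)$ credit (the parallel BFS from $u$ and $v$ costs $O(m)$ over all iterations since $E_u\cap E_v=\emptyset$ by Lemma~\ref{Lem:ShortLongPath} and the explored edges lie inside $G[U]$ for the set $U$ removed from $V'$, and the tree-eviction scans cost $O(h|E''|)$ per iteration with $\sum|E''| = O(m)$, hence $O(hm)$ overall), or against vertices permanently removed from $V'$: each iteration deletes $\Omega(\ell)$ vertices from $V'$, so there are $O(n/\ell)$ iterations and the per-iteration query, traversal, and eviction-scan cost of $O(h^2 d)$ sums to $O((n/\ell)h^2 d) = O(h^2 n\log n)$; similarly the $O(n/d)$ tree-sized pieces removed from $V'$ pay for the $O(h^2 d)$ oracle-update scans, contributing $O(h^2 n)$. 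Adding up yields $O(h\ell m^{1+\epsilon} + hm + h^2 n\log n) = O(h\ell m^{1+\epsilon} + h^2 n\log n)$.

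The part I expect to be the real obstacle is not the structural argument but the amortized accounting in (c): one must check that every update fed to an oracle $\mathcal D_i$ is a genuine edge deletion of $G_i'$ -- consistent with $V'$ shrinking, $M\cup A$ changing in the right monotone way, and $V(T_i)$ being reinserted only conceptually -- so that Lemma~\ref{Lem:DistOracle} is applicable; that the $O(h)$ credits per edge really cover all the ``re-scan $T_j$ for surviving incident edges'' operations across iterations; and that the charging to removals from $V'$ is valid even in the corner case where the BFS in lines~$11$--$17$ exhausts a component rather than triggering line~$15$ on a proper layer. Pinning down the constants $k$, $C = 4k$, and $d = 4k\rho$ so that $d/(2k-1)-1 > \rho$ (as used in case~$2$) is the only genuinely arithmetic point, and it is already verified in Lemma~\ref{Lem:ShortLongPath}.
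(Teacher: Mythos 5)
Your proposal follows the paper's own argument essentially verbatim: the same case distinction via the $h-1$ decremental oracles with $d=4k\rho$, the same use of Lemma~\ref{Lem:ShortLongPath} to justify line~$10$, the same tree-size bound giving $O(h^2\ell\log n + n/\ell)$, and the same amortized charging (edge credits for the eviction scans, $\Omega(d)$- and $\Omega(\ell)$-sized removals from $V'$ paying the per-iteration $O(h^2d)$ costs, parallel BFS charged to $E_u,E_v$). The ``obstacles'' you flag are handled exactly as you sketch (in particular the BFS terminates within $\rho$ layers on a nonempty layer since $G[V']$ is connected and $d_{G[V']}(u,v)>\rho$), and your choice $k=\lceil 2/\epsilon\rceil$ to absorb the $\log n$ factor is a harmless sharpening of the paper's $k=\lceil 1/\epsilon\rceil$.
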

In the appendix, we show that when $\ell = \Omega(n^{\epsilon'})$ for an arbitrarily small chosen constant $\epsilon' > 0$
then $m = O(\mbox{poly}(h)n)$. In this
case, our algorithm can be modified to initially check if $G$ exceeds this edge bound and if so reject and halt. Hence, we can replace $m$ by $n$ in Theorem~\ref{Thm:Algo1}.

\section{Speed-up using Bootstrapping and Spanners}\label{sec:Spanner}
In this section, we speed up the result in Theorem~\ref{Thm:Algo1} for large $\ell$ using a boot-strapping technique
presented in~\cite{CWN}. Note that the running time in Theorem~\ref{Thm:Algo1} grows with $\ell$. The idea for the speed-up
is to apply the theorem for a smaller depth than $\ell$ (which is possible by Lemma~\ref{Lem:ShallowInclusions})
to obtain large separators fast and to use these separators to build
a compact representation of certain subgraphs of $G$
which can be used to give a fast implementation of $\texttt{genericalg}$ for the actual depth $\ell$.
To simplify our bounds, we assume here and in the next section that $h = O(1)$ and $\ell = \Omega(n^{\epsilon'})$ for an
arbitrarily small constant $\epsilon' > 0$. The actual dependency on $h$ is a low-degree polynomial and for smaller values of $\ell$,
we can apply our first algorithm.
We have moved the details for our second algorithm
to the appendix since the technique is basically the same as in~\cite{CWN}.
The performance of our second algorithm is stated in the following theorem.
\begin{theorem}\label{Thm:Algo2}
Given a graph $G$ with $n$ vertices and given $h,\ell\in\mathbb N$ and constants $\epsilon,\epsilon' > 0$ such that
$h = O(1)$ and $\ell = \Omega(n^{\epsilon'})$, there is an algorithm with
$O(n^{1+\epsilon}\sqrt\ell + n^{2+\epsilon}/\ell^{3/2})$ running time which either
correctly reports that $G$ contains $K_h$ as a depth $O(\ell\log n)$-minor or outputs a separator of $G$ of size $O(\ell\log n + n/\ell)$.
\end{theorem}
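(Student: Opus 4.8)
The plan is to implement \texttt{genericalg} for depth $\ell$ by bootstrapping off Theorem~\ref{Thm:Algo1} applied at a smaller depth, following the scheme of~\cite{CWN}. Fix an intermediate depth parameter $r$ with $1 \le r \le \ell$, to be optimized at the end (I expect the sweet spot to be around $r = \sqrt{\ell}$, which is what produces the $\sqrt\ell$ and $\ell^{-3/2}$ terms in the claimed bound). First I would run the algorithm of Theorem~\ref{Thm:Algo1} at depth $r$: either it returns a certificate of $K_h$ as a depth $O(r\log n)$-minor — hence also a depth $O(\ell\log n)$-minor by Lemma~\ref{Lem:ShallowInclusions}, and we are done — or it returns a separator $S_0$ of size $O(h^2 r\log n + n/r)$ in time $O(h r n^{1+\epsilon})$ (using that $m = O(\mathrm{poly}(h)\,n)$ since $\ell = \Omega(n^{\epsilon'})$ implies $r$ can be taken polynomial in $n$, or more simply just applying the sparsity bound from the appendix). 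Recursing on the pieces of $G[V\setminus S_0]$, or applying this separator repeatedly, yields a recursive decomposition of $G$ into $O(r)$-sized (or more precisely, controlled-size) pieces, each bordered by a union of such separators of total size $\tilde O(h^2 n/\sqrt r \cdot \ldots)$ — the standard recursive-separator accounting.

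The key idea is then to build, for each piece $P$ of this decomposition, a \emph{compact distance representation}: contract each piece to a small gadget that preserves distances up to $\rho = 2\lceil \ell\ln n\rceil$ between its boundary vertices, so that BFS-to-depth-$\rho$ computations and the tree-identification step of \texttt{genericalg} at depth $\ell$ can be run on this contracted graph, which has only $\tilde O(n/\sqrt\ell)$-ish vertices rather than $n$. Concretely, I would replace each piece by a (bounded-hop) approximate shortest-path structure on its boundary — either an explicit clique on the boundary with shortest-path-distance edge weights, or a sparser emulator — truncated at distance $\rho$. Because each piece has $O(r)$ internal vertices, computing these boundary-distance gadgets costs $\tilde O(\text{(number of pieces)} \times \text{poly}(r))$, and the resulting contracted graph $\hat G$ has $\tilde O(n/r)$ vertices and $\tilde O(\text{boundary}^2)$ edges. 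Running \texttt{genericalg} (the distance-oracle implementation of Section~\ref{sec:SepDistOracle}, or a direct BFS implementation) on $\hat G$ at depth $\ell$ then runs in time polynomial in the size of $\hat G$ rather than $G$; lifting the resulting separator of $\hat G$ back to $G$ adds at most the boundary vertices, so the separator size becomes $O(n/\ell) + (\text{separator of }\hat G) + (\text{total boundary})$, which I would bound by $O(n/\ell + \ell\log n)$ after choosing $r$.

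The running-time bookkeeping then splits into: (i) the bootstrap call, $O(h r n^{1+\epsilon})$; (ii) building the per-piece gadgets, roughly $\tilde O((n/r)\cdot r^{1+\epsilon}) = \tilde O(n r^\epsilon)$ plus all-pairs-boundary-distance costs inside pieces which I expect to dominate at $\tilde O((n/r)\cdot r^{2}) \cdot$-type terms, contributing the $n^{2+\epsilon}/\ell^{3/2}$ branch; and (iii) running \texttt{genericalg} on $\hat G$, which by Theorem~\ref{Thm:Algo1}-style analysis on a graph of $\tilde O(n/\sqrt\ell)$ vertices gives the $\sqrt\ell\, n^{1+\epsilon}$ branch. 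Setting $r = \sqrt\ell$ balances (i) against the $\hat G$-computation and yields the stated $O(n^{1+\epsilon}\sqrt\ell + n^{2+\epsilon}/\ell^{3/2})$ bound. The main obstacle — and the place where care is genuinely required — is verifying that the distance-preserving contraction is \emph{faithful enough} for \texttt{genericalg}: the algorithm needs that BFS to depth $\rho$ in $G[V']$ and incidence to the trees in $\mathcal M$ are correctly (or approximately-correctly, within the slack afforded by the constant $C$ in line~7 and the $(2k-1)$ stretch of Lemma~\ref{Lem:DistOracle}) reflected in $\hat G$, even as $V'$ shrinks — i.e., the gadgets must support the \emph{decremental} nature of the process. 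Since the details mirror~\cite{CWN} closely, I would defer the full verification to the appendix and here only set up the decomposition, state the gadget-correctness lemma, and plug the three cost terms into the bound.
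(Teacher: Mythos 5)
Your high-level plan is the same route the paper takes (following~\cite{CWN}): bootstrap with Theorem~\ref{Thm:Algo1} at a smaller depth to build a clustering, replace clusters by boundary distance gadgets (dense distance graphs, then spanners), and run the generic algorithm using the resulting sublinear-size graph. However, two of your concrete steps fail as stated. First, the separator-size accounting: you propose to take a separator of the contracted graph $\hat G$ and ``add at most the boundary vertices'' when lifting back to $G$. The total boundary of any clustering you can afford to build is $\tilde\Theta(n/\sqrt\ell)$, which for $\ell$ polynomial in $n$ dwarfs the claimed $O(n/\ell+\ell\log n)$; no choice of $r$ fixes this, since driving the boundary down to $O(n/\ell)$ would require separators at depth $\approx\ell$, i.e.\ the problem you are trying to solve. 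In the paper the separator is never read off the contracted graph: \texttt{genericalg} (in the variant that looks for trees in $G[V']$) is run on $G$ itself, so the output is still $M\cup B$ with $O(h)$ trees of size $O(\ell\log n)$ and layers of total size $O(n/\ell)$; the clustering, dense distance graphs and spanners only make each iteration cheap. Relatedly, your cost attribution is off: the in-cluster all-pairs boundary-distance work is $\tilde O(n\sqrt\ell)$ by Lemma~\ref{Lem:RMain} (and your own estimate $(n/r)\cdot r^2$ with $r=\sqrt\ell$ gives $n\sqrt\ell$, not $n^{2}/\ell^{3/2}$); the $n^{2+\epsilon}/\ell^{3/2}$ term comes from rerunning Dijkstra on the spanner union $S_X$ of size $\tilde O((n/\sqrt\ell)^{1+2\epsilon})$ in each of the $O(n/\ell)$ iterations. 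Second, the parameters: pieces of size $O(r)$ with $r=\sqrt\ell$ cut by depth-$r$ separators cannot give total boundary $\tilde O(n/r)$, because the separator guarantee $O(n'/r+h^2r\log n')$ is of the same order as the piece once $n'$ approaches $r$; the consistent choice (and the paper's, Lemma~\ref{Lem:NestedrClusteringMain}) is clusters of size up to $\ell$ with $\tilde O(\sqrt\ell)$ boundary each, built by calling Theorem~\ref{Thm:Algo1} at depth roughly $\sqrt\ell$ (times an $n^{O(\epsilon)}$ factor), giving total boundary $\tilde O(n/\sqrt\ell)$.

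The part you defer --- that the gadgets must track the \emph{decremental} evolution of $V'$ --- is precisely where the paper needs a device you have not supplied: interior vertices of clusters keep entering $M\cup B$, so static boundary gadgets stop representing $G[V']$. The paper uses a \emph{nested} clustering (each cluster recursively split down to single edges) so that for the current active set $X=M\cup B$ one can refine to a cluster family $\mathcal C_X$ in which every vertex of $X$ is a boundary vertex while the total boundary stays $\tilde O(n/\sqrt\ell)$ (Lemma~\ref{Lem:RX}), and it maintains $D_{\delta C\setminus X}(C)$ and its spanner under passive/active flips in $O(|\delta C|^2)$ per flip, $\tilde O(n\sqrt\ell)$ overall. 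Also note that your suggestion to run the distance-oracle implementation of Section~\ref{sec:SepDistOracle} on $\hat G$ does not work here, because the refinement forces edge insertions that the Roditty--Zwick oracle does not support; overcoming exactly that obstacle is the content of Section~\ref{sec:SpannerDistOracle}, while the second algorithm simply reruns Dijkstra on $S_X$ each iteration.
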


\section{Combining Spanners and Distance Oracles}\label{sec:SpannerDistOracle}
Our third and final algorithm combines techniques from the other two. The overall idea is to maintain decremental
distance oracles as in the first algorithm but for a graph of sublinear size consisting of spanners as in the
second algorithm. However, there are several obstacles that we need to overcome to handle this, most notably
that maintaining a sublinear-size graph representation require edge insertions in addition to deletions, something that
the oracles do not support.

First, we give some definitions and results that we needed for the second algorithm (see the appendix for more details).
For a connected graph $G = (V,E)$, a \emph{cluster} (of $G$) is a connected subgraph of $G$. A \emph{clustering} (of $G$) is a
collection $\mathcal C$ of clusters of $G$ whose edge sets form a partition of $E$.
A \emph{boundary vertex} of a cluster $C\in\mathcal C$ is
a vertex that $C$ shares with other clusters in $\mathcal C$. All other vertices of $C$ are
\emph{interior vertices} of $C$. For a subgraph $C'$ of $C$, we let $\delta C'$ denote the set of boundary vertices of $C$
contained in $C'$ and we refer to them as the boundary vertices of $C'$ (w.r.t.\ $C$).

Let $n$ be the number of vertices of $G$. For a parameter $r > 0$, an \emph{$r$-clustering} (of $G$) is
a clustering with clusters having a total of $\tilde O(n/\sqrt r)$ boundary vertices (counted with multiplicity) and each
containing at most $r$ vertices and $\tilde O(\sqrt r)$ boundary vertices. The total vertex size of clusters
in an $r$-clustering is $n + \tilde O(n/\sqrt r)$ and the number of clusters is $\tilde O(n/\sqrt r)$.
\begin{lemma}\label{Lem:NestedrClusteringMain}
Let $G$ be a vertex-weighted graph with $m$ edges and $n$ vertices, let $h,\ell\in\mathbb N$ and $0 < \epsilon < 1$ where $h$ and $\epsilon$
are constants and where $\ell = O(\sqrt{n})$ and $\ell = \Omega(n^{\epsilon})$.
For any parameter $r\in(C\log n,\ell]$ for a sufficiently large constant $C$, there is an algorithm with
$O(\sqrt rm^{1 + \epsilon})$ running time that either gives a certificate that $G$ contains $K_h$ as a depth
$O(\ell\log n)$-minor or outputs an $r$-clustering of $G$.
\end{lemma}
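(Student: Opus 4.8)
The plan is to build the $r$-clustering by recursive balanced separation, in the spirit of Frederickson's construction of $r$-divisions of planar graphs, using the separator algorithm of Theorem~\ref{Thm:Algo1} as the work-horse, always invoked with depth parameter $\ell'=\lceil\sqrt r\,\rceil$. Since $r\le\ell$ we have $\ell'\le\ell$, so by Lemma~\ref{Lem:ShallowInclusions} every subgraph $H$ arising in the recursion still excludes $K_h$ as a depth-$\ell'$ minor; hence each invocation of Theorem~\ref{Thm:Algo1} on a piece $H$ either returns a $K_h$-minor certificate of depth $O(\ell'\log|V(H)|)=O(\ell\log n)$ --- in which case we halt and output it, a legitimate certificate of $K_h$ as a depth $O(\ell\log n)$-minor of $G$ since $H$ is a subgraph of $G$ --- or returns a separator. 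In the latter case the separator of $H$ has size $O(h^2\sqrt r\log n+n_H/\sqrt r)$ and is found in $O(h\sqrt r\,m_H^{1+\epsilon}+h^2 n_H\log n)$ time, where $n_H=|V(H)|$ and $m_H=|E(H)|$. We use our own integer vertex-weight functions in the recursion (the weights of $G$ are irrelevant to the combinatorial definition of an $r$-clustering), and we use that $G$, and hence every subgraph, is sparse, $m=O(\mbox{poly}(h)\,n)$, which holds because $\ell=\Omega(n^{\epsilon})$ (appendix).

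\emph{Phase one (reducing sizes).} Starting from $G$ with the uniform vertex weight, we recursively delete a balanced separator from every current piece that has more than $r$ vertices; edges incident to a separator are assigned to adjacent pieces, the (rare) leftover subgraphs being fed back into the same recursion, so separator vertices become boundary vertices. After $O(\log(n/r))$ levels every piece has at most $r$ vertices, and merging pieces much smaller than $r$ into a neighbour leaves $O(n/r)$ leaf pieces. The boundary vertices created (counted with multiplicity) are bounded by a geometric sum over the $O(\log n)$ levels: a level whose pieces have size $\approx s$ has $O(n/s)$ pieces, each producing $O(h^2\sqrt r\log n+s/\sqrt r)$ separator vertices, each of which by sparsity lies in only $O(\mbox{poly}(h))$ pieces; this sums to $O(\mbox{poly}(h)\,n/\sqrt r)$ per level and to $\tilde O(n/\sqrt r)$ over all levels.

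\emph{Phase two (reducing boundary counts).} A phase-one leaf $H$ has at most $r$ vertices but up to $b_H\le r$ boundary vertices. We recursively separate $H$ again, now with the weight function putting weight $1$ on the current boundary vertices of the piece and $0$ on interior vertices, stopping a sub-piece once it has at most $\tau=\Theta(h^2\sqrt r\log n)=\tilde O(\sqrt r)$ boundary vertices and merging sub-pieces with fewer than $\tau/2$ boundary vertices into a neighbour. A split multiplies the boundary count of a sub-piece by at most $2/3$ and adds at most $|S|=O(h^2\sqrt r\log n+n_H/\sqrt r)=\tilde O(\sqrt r)$ new boundary vertices, so for $\tau$ a large enough multiple of $h^2\sqrt r\log n$ the recurrence $b\mapsto\frac{2}{3}b+\tilde O(\sqrt r)$ reaches $\le\tau$ within $O(\log r)$ levels; this is the shallow-minor-free analogue of the second phase of Frederickson's $r$-division, with target boundary size $\tilde O(\sqrt r)$ in place of $O(\sqrt r)$. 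Since every surviving sub-piece then has $\Theta(\tau)=\tilde O(\sqrt r)$ boundary vertices and, by sparsity, each vertex lies in only $O(\mbox{poly}(h))$ sub-pieces, $H$ splits into $\tilde O(1+b_H/\sqrt r)$ sub-pieces and at most $\tilde O(b_H+\sqrt r)$ new boundary vertices (with multiplicity) are created inside $H$. Summing over the $O(n/r)$ leaves and using $\sum_H b_H=\tilde O(n/\sqrt r)$ from phase one, the final clustering has $\tilde O(n/r)\subseteq\tilde O(n/\sqrt r)$ clusters, $\tilde O(n/\sqrt r)$ boundary vertices counted with multiplicity, every cluster has at most $r$ vertices and $\tilde O(\sqrt r)$ boundary vertices, and the total vertex size $n+\tilde O(n/\sqrt r)$ follows from the elementary accounting noted after the definition of an $r$-clustering.

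\emph{Running time.} At every level of phase one the pieces have total $\tilde O(m)$ edges, since the per-level edge blow-up is $O(\mbox{poly}(h)\,n/\sqrt r)=O(m)$ (using $h=O(1)$ and $n/\sqrt r\le n\le m+1$); thus, by superadditivity of $x\mapsto x^{1+\epsilon}$, each level costs $\tilde O(h\sqrt r\,m^{1+\epsilon}+h^2 n)$, and over $O(\log(n/r))$ levels this is $\tilde O(h\sqrt r\,m^{1+\epsilon})$. Phase two costs $\tilde O(\mbox{poly}(h)\,\sqrt r\,n\,r^{\epsilon})=\tilde O(\mbox{poly}(h)\,\sqrt r\,n^{1+\epsilon})=\tilde O(\mbox{poly}(h)\,\sqrt r\,m^{1+\epsilon})$, using $\sum_H(1+b_H/\sqrt r)=\tilde O(n/r)$, that each sub-piece has at most $r$ vertices and $O(\mbox{poly}(h)\,r)$ edges, and $r\le n\le m+1$. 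Running Theorem~\ref{Thm:Algo1} internally with $\epsilon/2$ in place of $\epsilon$ absorbs the polylogarithmic and $O(\log(n/r))$-factor overhead into $m^{1+\epsilon}$, and since $h$ is a constant the whole algorithm runs in $O(\sqrt r\,m^{1+\epsilon})$ time. I expect phase two to be the main obstacle: making the bounds on the number of sub-pieces, on the new boundary vertices, and on the running time hold at once needs both the sparsity-based bound on how many clusters share a vertex and the right merging rule for under-full sub-pieces --- the usual subtlety in Frederickson-style $r$-divisions, here complicated by the fact that the best separators available for $(\le r)$-vertex shallow-minor-free graphs have size $\tilde O(\mbox{poly}(h)\sqrt r)$, only within polylogarithmic and $\mbox{poly}(h)$ factors of the target boundary size, so the constants in the stopping rule must be chosen with care.
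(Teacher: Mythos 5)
Your overall plan coincides with the paper's proof: the paper also builds the clustering by recursively invoking Theorem~\ref{Thm:Algo1} with a reduced depth parameter (any certificate produced being valid for depth $O(\ell\log n)$ by Lemma~\ref{Lem:ShallowInclusions} and subgraph monotonicity), recursing on pieces with more than $r$ vertices, and then fixing clusters with too many boundary vertices by re-running the separator algorithm with the vertex weight concentrated on the boundary vertices; the running-time analysis via a geometrically shrinking recursion with the $\epsilon$-slack absorbing the $O(\log n)$ levels is the same. The paper calls Theorem~\ref{Thm:Algo1} with $\ell' = r^{1/2-\epsilon'}n_H^{\epsilon'}$ on a piece with $n_H$ vertices rather than your fixed $\ell'=\sqrt r$, but your choice also yields per-piece separators of size $\tilde O(\sqrt r)+O(n_H/\sqrt r)$ and per-level cost $O(\sqrt r\, m^{1+\epsilon'})$, so that difference is immaterial.

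There is, however, one step that fails as written, and it sits at the counting heart of the lemma: the claim that ``by sparsity'' each separator vertex lies in only $O(\mbox{poly}(h))$ pieces (and, in phase two, that each vertex lies in $O(\mbox{poly}(h))$ sub-pieces). Sparsity bounds the average degree, not the number of pieces a single vertex can enter. Take the star $K_{1,n-1}$ (planar, hence excluding $K_h$ at every depth for $h\geq 5$): its balanced separator is the centre alone, which is adjacent to $n-1$ singleton components and is therefore copied into up to $n-1$ pieces at a single split; indeed in any $r$-clustering the centre necessarily appears in $\Omega(n/r)$ clusters. So your per-level bound of $O(\mbox{poly}(h)\,n/\sqrt r)$ boundary copies, and likewise the phase-two bounds on the number of sub-pieces and new boundary copies, do not follow from the per-vertex claim. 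The total-multiplicity bound $\tilde O(n/\sqrt r)$ is true, but establishing it is exactly the part the paper does not redo and instead imports from~\cite{CWN}: one must charge copies of separator vertices to edges into the corresponding components and use the merging of undersized pieces (which you do include) to drive the number of clusters and of boundary copies down to $\tilde O(n/\sqrt r)$, rather than appeal to a bound on how many pieces one vertex can hit. Replacing your sparsity argument by that CWN/Frederickson-style accounting closes the gap; without it the central bound of the lemma is unproved.
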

\begin{corollary}\label{Cor:SparseMain}
Let $\epsilon > 0$ be a constant. If $\ell = \Omega(n^\epsilon)$, any $n$-vertex graph excluding
$K_h$ as a depth $\ell$-minor has only $O(\mbox{poly}(h)n)$ edges.
\end{corollary}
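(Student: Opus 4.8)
The plan is to prove the contrapositive: it suffices to exhibit a polynomial $f$ such that every $n$-vertex graph $G$ with more than $f(h)n$ edges contains $K_h$ as a minor of depth at most $\ell$. For $n$ below a threshold $n_0=\mbox{poly}(h)$ to be fixed below this is vacuous, since then $|E(G)|\le\binom n2<n_0^2=O(\mbox{poly}(h)\cdot n)$; so assume $n\ge n_0$. I would then reduce the corollary to the clean combinatorial fact that \emph{a graph with average degree at least $f(h)$ contains $K_h$ as a minor of depth $O(\log n)$}, for a suitable polynomial $f$. Granting it, since $\ell=\Omega(n^\epsilon)$ with $\epsilon$ constant we may take $n_0=\mbox{poly}(h)$ large enough that the resulting depth $c\log n$ is at most $\ell$ for all $n\ge n_0$; by definition a $K_h$-minor of depth $\ell'\le\ell$ is also one of depth $\ell$ (cf.\ Lemma~\ref{Lem:ShallowInclusions}), so $G$ contains $K_h$ as a depth-$\ell$ minor, as required. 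Note that the slack in $\ell=\Omega(n^\epsilon)$ is enormous, so any version of the fact giving depth $n^{\epsilon/2}$, or even $n^{o(\epsilon)}$, would suffice.

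For the combinatorial fact I would start from the standard degeneracy argument: if $|E(G)|>f(h)n$ then deleting vertices of degree $<f(h)$ one at a time removes fewer than $f(h)n$ edges, so $G$ has a subgraph $H$ with minimum degree $\ge f(h)$; pick $f$ so that this exceeds the Kostochka--Thomason threshold $\Theta(h\sqrt{\log h})$ that forces a $K_h$-minor. Now split into two cases. If some subgraph of $H$ on at most $\mbox{poly}(h)$ vertices already has minimum degree above that threshold, it contains a $K_h$-minor of depth $O(\mbox{poly}(h))=O(1)\le\ell$ and we are done. Otherwise $H$ has no small dense part, and I would argue that this forces expansion at small scales: running a BFS from any vertex of $H$, a Moore-type count shows that three consecutive BFS layers cannot all have fewer than $\tfrac13\delta(H)$ vertices (else some vertex has more than $\delta(H)$ neighbours confined to those layers), so at least every third layer is large and within $O(\log n)$ layers one reaches a ball of, say, $\Omega(n^{9/10})$ vertices; recursing inside this ball --- and bounding the radius blow-up across scales --- one should be able to locate, inside a region of radius $O(\log n)$, a minor dense enough to contain $K_h$ by the Kostochka--Thomason theorem, again of depth $O(\log n)$.

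An alternative route that is essentially immediate from the preceding lemma, at the cost of a slightly weaker bound, is to set $\ell_0=\min(\ell,\lfloor\sqrt n\rfloor)=\Omega(n^\epsilon)$ and apply Lemma~\ref{Lem:NestedrClusteringMain} to $G$ with depth parameter $\ell_1=\lfloor\ell_0/(c'\log n)\rfloor=\Omega(n^{\epsilon/2})$ and $r=\lceil 2C\log n\rceil\in(C\log n,\ell_1]$: either the algorithm certifies a $K_h$-minor of depth $O(\ell_1\log n)\le\ell_0\le\ell$ (after fixing the constant $c'$), contradicting that $G$ excludes $K_h$ as a depth-$\ell$ minor, or it outputs an $r$-clustering, and then, since the clusters' edge sets partition $E(G)$ and each cluster has at most $r$ vertices, $|E(G)|$ is at most $\binom r2$ times the number of clusters, i.e.\ $\tilde O(n/\sqrt r)\cdot\binom r2=\tilde O(nr^{3/2})=\tilde O(n)$. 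This yields $m=\tilde O(n)$, which already suffices for every use of the corollary in the paper (and handles the case of constant $h$; the first route is the one that gives the stated $O(\mbox{poly}(h)n)$ bound for general $h$).

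I expect the main obstacle to be the extraction step in the first route: converting ``no small dense subgraph, hence small-scale expansion'' into an actual shallow $K_h$-minor while keeping the density threshold polynomial in $h$. Off-the-shelf tools for producing small dense substructures (e.g.\ Kostochka--Pyber-type bounds on short $K_h$-subdivisions) only give an $\exp(\mbox{poly}(h))$ density threshold, which would weaken $O(\mbox{poly}(h)n)$ to $O(f(h)n)$ with $f$ exponential; getting the polynomial threshold seems to require exploiting minimum degree directly via the ball-growth/expansion dichotomy sketched above. If one is content with $m=\tilde O(n)$, the second route avoids this difficulty entirely.
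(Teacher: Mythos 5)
Your ``alternative route'' is essentially the paper's own argument, but you stop one step short of the stated bound, and your ``main route'' has a real gap; so as written the proposal does not prove the corollary as stated. The paper also derives the corollary from Lemma~\ref{Lem:NestedrClusteringMain}, but with $r=\ell$ rather than $r=\Theta(\log n)$: each cluster then has at most $\ell$ vertices, so if some cluster contained a $K_h$-minor it would automatically be a minor of depth at most $\ell-1\leq\ell$ of $G$, contradicting the hypothesis; hence every cluster is $K_h$-minor-free and, by Kostochka--Thomason, has only $O(h\sqrt{\log h}\,|V(C)|)$ edges. Since the level-$1$ clusters partition $E$ and their total vertex size is $n+\tilde O(n/\sqrt\ell)=O(n)$ (using $\ell=\Omega(n^\epsilon)$), this gives $m=O(h\sqrt{\log h}\,n)=O(\mbox{poly}(h)n)$ with no extra polylogarithmic loss. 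Your choice $r=\Theta(\log n)$ together with the trivial $\binom r2$ bound per cluster only yields $\tilde O(n)$, i.e.\ a genuinely weaker statement than the one claimed (you acknowledge this); the fix is exactly the missing idea above: take $r=\ell$ and exploit that shallow-minor-freeness makes each small cluster outright minor-free, so you can charge edges to cluster vertices via the classical extremal bound instead of to $\binom r2$.

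Your first route --- proving from scratch that average degree $\mbox{poly}(h)$ forces a $K_h$-minor of depth $O(\log n)$ --- is not carried out, and the sketch given does not work as stated. The Moore-type observation only guarantees that at least one of every three consecutive BFS layers has size at least $\delta(H)/3$; this gives \emph{linear} growth of the ball, so reaching $\Omega(n^{9/10})$ vertices takes $\Omega(n^{9/10}/\delta)$ layers, not $O(\log n)$. To get logarithmic radius you need genuine expansion, and the dichotomy ``no small dense subgraph $\Rightarrow$ expansion at small scales,'' followed by embedding a $K_h$-minor with branch sets of radius $O(\log n)$ at a density threshold polynomial in $h$, is a substantial piece of extremal graph theory (Krivelevich--Sudakov-type arguments) that is neither in the paper nor supplied by you; you flag it yourself as the main obstacle. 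The paper never needs this fact: the clustering lemma (built on Theorem~\ref{Thm:Algo1}, whose proof does not use the corollary, so there is no circularity) already packages the ``dense or small separator or shallow minor'' trichotomy, and the $r=\ell$ choice plus per-cluster minor-freeness finishes the job.
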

\begin{lemma}\label{Lem:RMain}
For an $\ell$-clustering $\mathcal C$, $\sum_{C\in\mathcal C}|C||\delta C|, \sum_{C\in\mathcal C}|\delta C|^3 = \tilde O(n\sqrt{\ell})$.
\end{lemma}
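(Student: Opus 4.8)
The plan is to read off both estimates directly from the three defining properties of an $\ell$-clustering recalled just above the lemma, taking $r=\ell$: every cluster $C$ satisfies $|C|\le\ell$ and $|\delta C|=\tilde O(\sqrt\ell)$, and the boundary vertices counted with multiplicity satisfy $\sum_{C\in\mathcal C}|\delta C|=\tilde O(n/\sqrt\ell)$; we will also use that the total vertex size of the clusters is $\sum_{C\in\mathcal C}|C|=n+\tilde O(n/\sqrt\ell)=O(n)$. No structural information about how the clustering was produced is needed — each of the two bounds reduces to pulling a uniform per-cluster factor out of the sum and then applying one of the global bounds.

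For the first sum I would factor out the per-cluster bound on the boundary size:
\[
  \sum_{C\in\mathcal C}|C|\,|\delta C|\;\le\;\Bigl(\max_{C\in\mathcal C}|\delta C|\Bigr)\sum_{C\in\mathcal C}|C|\;=\;\tilde O(\sqrt\ell)\cdot O(n)\;=\;\tilde O(n\sqrt\ell).
\]
Equivalently, one can factor out $|C|\le\ell$ instead and use the multiplicity bound, $\sum_{C}|C|\,|\delta C|\le\ell\sum_{C}|\delta C|=\ell\cdot\tilde O(n/\sqrt\ell)=\tilde O(n\sqrt\ell)$; it is worth recording this second form since it is the one that adapts to the cubic sum. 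For the second sum I would peel off two factors of $|\delta C|$ using the uniform bound and sum the remaining factor with the multiplicity bound:
\[
  \sum_{C\in\mathcal C}|\delta C|^3\;\le\;\Bigl(\max_{C\in\mathcal C}|\delta C|^2\Bigr)\sum_{C\in\mathcal C}|\delta C|\;=\;\tilde O(\ell)\cdot\tilde O(n/\sqrt\ell)\;=\;\tilde O(n\sqrt\ell),
\]
which is the claimed bound.

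Since both steps are immediate, there is no genuine obstacle here; the only care needed is bookkeeping of the suppressed polylogarithmic factors. Each $\tilde O$ above hides a fixed $\mbox{polylog}(n)$ factor coming from the per-cluster boundary bound and/or the multiplicity bound, and the product of two such factors is still $\mbox{polylog}(n)$, so collapsing everything into a single $\tilde O(n\sqrt\ell)$ is legitimate; one should also keep in mind that $n+\tilde O(n/\sqrt\ell)=O(n)$ so that the first estimate does not accumulate an extra factor.
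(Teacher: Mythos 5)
Your proof is correct for the statement as it is written: for a (one-level) $\ell$-clustering, both bounds follow mechanically from the three defining properties recalled just above the lemma, exactly as you do -- $\sum_C|C||\delta C|\le\bigl(\max_C|\delta C|\bigr)\sum_C|C|=\tilde O(\sqrt\ell)\cdot\tilde O(n)$ and $\sum_C|\delta C|^3\le\bigl(\max_C|\delta C|\bigr)^2\sum_C|\delta C|=\tilde O(\ell)\cdot\tilde O(n/\sqrt\ell)$ -- and your remark that the polylog factors compose harmlessly disposes of the only bookkeeping issue (note that $n+\tilde O(n/\sqrt\ell)$ is in general only $\tilde O(n)$, not $O(n)$, but this changes nothing). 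The paper itself gives no proof: it defers to the earlier FOCS'11 paper, and its appendix counterpart (Lemma~\ref{Lem:R}) is stated for a \emph{nested} $\ell$-clustering, i.e.\ the whole multilevel family of clusters produced by recursively separating each top-level cluster down to single edges. That is the one respect in which your argument is genuinely narrower than what the paper ultimately relies on in Section~\ref{sec:Spanner2}: for the nested family, the multiplicity bound $\sum_C|\delta C|=\tilde O(n/\sqrt\ell)$ holds only for the level-$1$ clusters (the bottom level alone already contributes $\Theta(n)$ boundary vertices), so the one-line ``peel off the maximum'' step does not apply verbatim and one instead needs the geometric decay of cluster sizes and boundary counts through the recursion, which is the content of the omitted argument from the earlier paper. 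For the main-text Lemma~\ref{Lem:RMain}, which is applied to the plain $\ell$-clustering $\mathcal C'$ in Section~\ref{sec:SpannerDistOracle}, your elementary verification is complete and is arguably more self-contained than the paper's citation; just be aware it does not automatically cover the nested variant.
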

For a cluster $C$ and a subset $B$ of $\delta C$, consider the complete undirected graph
$D_B(C)$ with vertex set $B$. Each edge $(u,v)$ in $D_B(C)$ has weight equal to the weight of a shortest path in $C$ between $u$ and
$v$ that does not contain any other vertices of $\delta C$. We call $D_B(C)$ the \emph{dense distance graph of $C$ (w.r.t.\ $B$)}.
We have $d_{D_B(C)}(u,v) = d_C(u,v)$ for all $u,v\in B$. Our algorithm maintains a compact approximate representation of dense
distance graphs using a spanner construction for general graphs; For \emph{stretch}
$\delta\geq 1$, a \emph{$\delta$-spanner} of a graph is a
subgraph that preserves all shortest path distances up to a factor of $\delta$.
\begin{lemma}\label{Lem:SpannerMain}
Let $H$ be an undirected graph with nonnegative edge weights. For any constant $0 < \epsilon \leq 1$,
a $(1/\epsilon)$-spanner of $H$ of size $O(|V(H)|^{1 + 2\epsilon})$ can be constructed in linear time.
\end{lemma}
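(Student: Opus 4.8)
The plan is to derive Lemma~\ref{Lem:SpannerMain} directly from a known near-linear-time spanner construction, with its single integer parameter tuned to the stretch/size trade-off demanded by the statement. The classical greedy spanner of Alth\"{o}fer et al.\ already gives, for every integer $k\ge 1$, a $(2k-1)$-spanner of $H$ with $O(|V(H)|^{1+1/k})$ edges, but its running time is roughly $|E(H)|\cdot|V(H)|^{1/k}$, which is far from linear. Instead I would invoke the clustering-based construction of Baswana and Sen, which for every integer $k$ produces a $(2k-1)$-spanner of a weighted graph $H$ with $O(k\,|V(H)|^{1+1/k})$ edges in $O(k\,|E(H)|)$ expected time (deterministic variants with the same asymptotics exist, and since the oracle of Lemma~\ref{Lem:DistOracle} is already randomized with expected-time guarantees, using a randomized subroutine here is harmless).

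Given the constant $\epsilon\in(0,1]$, I would take $k := \lceil 1/(2\epsilon)\rceil$, which is a constant. Then $1/k\le 2\epsilon$, so the spanner has $O(k\,|V(H)|^{1+1/k}) = O(|V(H)|^{1+2\epsilon})$ edges, while its stretch is $2k-1 = O(1/\epsilon)$. Since $k=O(1)$ the construction runs in $O(|E(H)|+|V(H)|)$ time, i.e.\ linear in the size of $H$; for a dense input (such as the dense distance graphs used elsewhere in the paper) this is $\Theta(|V(H)|^2)$, which is necessarily larger than the output size since the algorithm reads all of $H$. Nonnegative (rather than strictly positive) edge weights cause no difficulty: a zero-weight edge merely makes its endpoints equidistant and is treated by the clustering like any other edge.

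The only point needing slight care is matching the constants exactly: with $k=\lceil 1/(2\epsilon)\rceil$ the stretch $2k-1$ may exceed $1/\epsilon$ by a constant factor, and in fact for some values of $\epsilon$ no integer $k$ satisfies both $2k-1\le 1/\epsilon$ and $|V(H)|^{1+1/k}\le |V(H)|^{1+2\epsilon}$ simultaneously. This is harmless: one either takes $k=\lfloor (1+1/\epsilon)/2\rfloor$ to force stretch $\le 1/\epsilon$ and then shrinks by a constant the exponent $2\epsilon$ in the size bound, or simply reads "$(1/\epsilon)$-spanner" as "$O(1/\epsilon)$-spanner" — in every later use $\epsilon$ is an arbitrarily small constant, so rescaling it by a constant changes nothing. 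Accordingly I do not expect a genuine obstacle here; the substance of the proof is the choice of $k$ together with the observation that for constant $k$ a spanner routine such as Baswana--Sen runs in linear time.
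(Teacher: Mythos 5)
Your proposal is correct and matches the paper's treatment: the paper gives no proof of this lemma, simply invoking a known $(2k-1)$-spanner construction of size $O(|V(H)|^{1+1/k})$ with $k\approx 1/(2\epsilon)$ constant (it cites the deterministic linear-time construction of Roditty, Thorup, and Zwick, whereas you use the randomized Baswana--Sen routine, an immaterial difference here). Your remark about the constant-factor slack in the stretch is also the right reading, since the paper only ever uses the lemma with stretch $O(1/\epsilon)$ and rescaled constants.
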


Now, we give our third and final algorithm. It starts by checking whether $G$ is sparse.
If not, it rejects and halts (Corollary~\ref{Cor:SparseMain}).
Otherwise, the set $V_{\Delta}$ of vertices of degree larger than some value
$\Delta$ are removed. This set has size $O(n/\Delta)$ and will be added as separator vertices in the end.
As we show below, the separator found for the remaining graph will have size $O(n/l) + \tilde{O}(\ell^2\Delta)$ so we pick
$\Delta = \sqrt n/\ell$ to
get a separator for the full graph of size $O(n/\ell) + \tilde O(\sqrt n\ell)$. Note that this is not
an asymptotic increase over the separator size in~\cite{ShallowMinor} when $\ell = \tilde O(n^{1/4})$.
We will give an efficient implementation of algorithm $\texttt{genericalg}$ to find such
a separator in time $O(n^{1+\epsilon}\sqrt\ell)$.

\subsection{Mini clusters}
In the following, we let $G$ denote the graph after the removal of high-degree vertices. Then the degree of $G$
is bounded by $\Delta = \sqrt n/\ell$. We start by
computing an $\ell$-clustering $\mathcal C'$ of $G$. From this, we will obtain a more refined set of
\emph{mini clusters}. Let $C$ be a cluster of $\mathcal C'$. In the following,
we describe how mini clusters associated with $C$ are formed.

For each interior vertex $w$ of $C$, let $C_w$ be the subgraph
of $C$ reachable from $w$ using paths in which no interior vertices belong to $\delta C$.
Let $\mathcal C_1'$ be the set of
subgraphs obtained this way over all $C\in\mathcal C'$. Notice that they are connected and intersect only in
boundary vertices of $C$. We form the subset $\mathcal C_1$ of
$\mathcal C_1'$ consisting of those subgraphs $C_w$ such that there is a pair of distinct vertices $b_1,b_2\in \delta C_w$
where $d_{C_w}(b_1,b_2)$ is smallest among all subgraphs of $\mathcal C_1'$ that have a path between $b_1$ and $b_2$; here we we resolve ties by regarding the path with smaller minimum index of an interior vertex (for some arbitrary fixed assignment of
indices to $V$) as the shortest one.
We say that $(b_1,b_2)$ is \emph{associated} with
$C_w$. Note that any pair of boundary vertices is associated with at most one subgraph of $\mathcal C_1$.

Let $\mathcal C_2$ be the set of edges $e$ such that $e$ has both endpoints in $\delta C$ for some
$C\in\mathcal C'$. We regard each edge of $\mathcal C_2$ as a graph.
Let $\mathcal C = \mathcal C_1\cup \mathcal C_2$ be the set of mini clusters. Note that $\mathcal C$ is
a clustering.
\begin{lemma}\label{Lem:MiniCluster}
With the above definitions, if $G$ excludes $K_h$ as a depth $O(\ell\log n)$-minor,
the total number of boundary vertices (counted with multiplicity)
of mini clusters of a cluster $C\in\mathcal C'$ is $O(|\delta C|\log|\delta C|)$.
\end{lemma}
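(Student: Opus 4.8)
\noindent\emph{Proof idea.}
Fix a cluster $C\in\mathcal C'$ and put $t=|\delta C|$; recall that $|V(C)|\le\ell$ and $t=\tilde O(\sqrt\ell)$. We bound the two kinds of mini clusters separately. We will use the following sparsity fact repeatedly: any subgraph $G'$ of $G$ with at most $|V(C)|\le\ell$ vertices is $K_h$-minor-free, because a $K_h$-minor of such a $G'$ has depth at most $|V(G')|\le\ell=O(\ell\log n)$ and, by Lemma~\ref{Lem:ShallowInclusions}, would be a $K_h$-minor of $G$ of depth $O(\ell\log n)$; hence $G'$ has $O(\mbox{poly}(h)\,|V(G')|)$ edges. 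Applied to $G[\delta C]$ this gives $O(\mbox{poly}(h)\,t)$ edges with both endpoints in $\delta C$, so the mini clusters of $\mathcal C_2$ coming from $C$ number $O(\mbox{poly}(h)\,t)$ and contribute $O(\mbox{poly}(h)\,t)$ boundary vertices, within the claimed bound. It remains to bound $\sum_{C_w}|\delta C_w|$ over the bubbles $C_w\in\mathcal C_1$ coming from $C$.

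For the bubbles we use three elementary facts: (i) distinct bubbles meet only in $\delta C$, and the interiors $I_w:=V(C_w)\setminus\delta C$ are connected and pairwise disjoint (and the bubbles are pairwise edge-disjoint); (ii) every $b\in\delta C_w$ has a neighbour in $I_w$ (take the last edge of a witnessing path from $w$ to $b$); (iii) each $C_w\in\mathcal C_1$ is the \emph{unique} bubble minimizing the distance between some associated pair $(b_1,b_2)\subseteq\delta C_w$. From (iii) the map sending $C_w$ to one of its associated pairs is injective into $\binom{\delta C}{2}$; using (i) one checks that this set of pairs, realized through the (essentially disjoint, edge-disjoint) bubbles, is a $K_h$-minor-free graph on $\delta C$ -- a depth-$O(\ell)$ minor of $G$ -- so the number of $\mathcal C_1$-bubbles from $C$ is $O(\mbox{poly}(h)\,t)$. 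This alone is not enough: a priori $\sum_{C_w}|\delta C_w|$ could be as large as $\Theta(t^2)$.

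The crux, which I expect to be the main obstacle, is to bound this total boundary size. The plan is to bucket by boundary size, $\mathcal C_1^{(j)}=\{C_w\in\mathcal C_1\text{ from }C:2^j\le|\delta C_w|<2^{j+1}\}$ for $1\le j\le\lceil\log_2 t\rceil$, and to prove $|\mathcal C_1^{(j)}|=O(\mbox{poly}(h)\,t/2^j)$; the contribution of bucket $j$ is then $O(2^j|\mathcal C_1^{(j)}|)=O(\mbox{poly}(h)\,t)$, and summing over the $O(\log t)$ buckets gives $O(\mbox{poly}(h)\,t\log t)$ (a constant times $t\log t$ under the standing assumption $h=O(1)$), the $j=1$ bucket being already covered by the bound on the number of bubbles. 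Equivalently, one wants each $b\in\delta C$ to lie on the boundary of only $O(\mbox{poly}(h)\log t)$ bubbles of $\mathcal C_1$. Here the shallow-minor hypothesis by itself is too weak: contracting each $I_w$ to a single branch set and keeping $\delta C$ as singletons exhibits a bipartite depth-$O(\ell)$ minor $\mathcal B$ of $G$ with exactly $\sum_{C_w}|\delta C_w|$ edges, and a $K_{h,h}$ subgraph of $\mathcal B$ would give $K_h$ as a depth-$O(\ell)$ minor of $G$ (hence a contradiction), but a pigeonhole/Zarankiewicz-type count on $\mathcal B$ only rules out the sum being polynomially larger than $t$. So one must additionally use that each $C_w\in\mathcal C_1$ realizes the \emph{globally} shortest path between its associated pair: two bubbles of $\mathcal C_1$ sharing a boundary vertex $b$ cannot have comparable associated distances, since the one with the shorter geodesic would, by routing a competing path through $b$, also be a shorter connection for the other's pair, contradicting uniqueness of the minimizer; iterating, the associated distances of the $\mathcal C_1$-bubbles through $b$ grow geometrically, so there are only $O(\log t)$ of them. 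Turning this interplay between minor-freeness and shortest-path optimality into a clean argument -- in particular keeping the base of the logarithm at $|\delta C|$ rather than at $n$ -- is the step I expect to require the most care.
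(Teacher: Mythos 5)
Your treatment of $\mathcal C_2$ and your bound of $O(\mbox{poly}(h)\,|\delta C|)$ on the \emph{number} of $\mathcal C_1$-bubbles of $C$ (distinct associated pairs, realized by internally disjoint paths, forming a $K_h$-minor-free graph on $\delta C$) are correct and coincide with what the paper does. The gap is exactly at the step you flag as the crux. You dismiss the bipartite contraction $\mathcal B$ (each interior $I_w$ contracted to one vertex, $\delta C$ kept as singletons) as ``too weak'' because you only extract from it $K_{h,h}$-subgraph-freeness and then invoke a Zarankiewicz-type count. But $\mathcal B$ is a \emph{minor} of $C$, and since $|V(C)|\leq\ell$, a $K_h$-minor of $C$ would be a depth-$(\ell-1)$ minor of $G$; hence $\mathcal B$ is $K_h$-minor-free and therefore has only $O(h\sqrt{\log h}\,|V(\mathcal B)|)$ edges~\cite{SparsityHMinor1,SparsityHMinor2} --- a linear bound, not a Zarankiewicz one. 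This is precisely the paper's mechanism: per bucket, the union of the stars (center $=$ contracted interior, leaves $=\delta C_w$) is a minor of $C$ with at most $s_i+|\delta C|$ vertices and at least $s_i2^{i-1}$ edges, so once $2^{i-1}$ exceeds the sparsity constant one gets $s_i2^i=O(|\delta C|)$; the constantly many small buckets are handled by the associated-pair minor $H$; summing over the $O(\log|\delta C|)$ buckets gives the lemma. (Indeed, combining your own count $|\mathcal C_1(C)|=O(\mbox{poly}(h)|\delta C|)$ with the minor-sparsity of $\mathcal B$ already yields $\sum_{C_w}|\delta C_w|\leq c_h\bigl(|\mathcal C_1(C)|+|\delta C|\bigr)$ in one shot.)

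The substitute argument you propose instead does not work. Two bubbles of $\mathcal C_1$ sharing a single boundary vertex $b$ need not compete for each other's associated pairs: a bubble can only compete for a pair $(b_1,b_2)$ if \emph{both} $b_1$ and $b_2$ lie in its boundary, and ``routing a competing path through $b$'' does not produce a $b_1$--$b_2$ path inside the other bubble. So the claimed geometric growth of associated distances at $b$ is unjustified, and the per-vertex statement you reduce to (each $b\in\delta C$ lies on the boundary of only $O(\mbox{poly}(h)\log t)$ bubbles of $\mathcal C_1$) is false in general: if $b$ is connected to $t-1$ other boundary vertices by $t-1$ bubbles, each being the unique (hence minimum-distance) connection for its pair $(b,b_i)$, then all $t-1$ bubbles belong to $\mathcal C_1$ and all contain $b$ on their boundary. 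Only the aggregate bound holds, and it has to come from the minor-sparsity of the contracted graph as above, not from a bound on the multiplicity of a single boundary vertex.
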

\begin{proof}
First, note that each subgraph of each cluster is
sparse. If not, it would mean that it does not exclude $K_h$ as a minor~\cite{SparsityHMinor1,SparsityHMinor2}
and hence (since its vertex size is
at most $\ell$) that $G$ does not exclude $K_h$ as a minor of depth $\ell - 1 = O(\ell\log n)$.

Let $\mathcal C_1(C)$ resp.~$\mathcal C_2(C)$ be the set of mini clusters of $\mathcal C_1$ resp.~$\mathcal C_2$
belonging to $C$. Note that the union of mini
clusters of $\mathcal C_2(C)$ is a subgraph of $C$ with vertex set contained in $\delta C$. Since any subgraph of $C$
is sparse, so is this union so
the total number of boundary vertices of mini clusters in $\mathcal C_2(C)$ is $2|\mathcal C_2(C)| = O(|\delta C|)$.

It remains to bound the total number of boundary vertices of mini clusters in
$\mathcal C_1(C)$.
Consider such mini cluster $C'$ and let $w$ be an interior vertex of $C'$. Growing a BFS tree from $w$ in
$C'$, backtracking once a boundary vertex is reached, we get a spanning tree of $C'$ where each boundary vertex is a leaf. The star
with center $w$ and with $\delta C'$ as the set of leaves is a minor of $C'$ as it can be obtained from this spanning tree
using edge contractions. For $i = 1,\ldots,\lceil\log |\delta C|\rceil$, let $\mathcal S_i$ be the set of stars with between $2^{i-1}$ and
$2^i$ leaves over all mini clusters of $\mathcal C_1(C)$ and let $s_i = |\mathcal S_i|$. The lemma will follow if we can show that
$\sum_{1\leq i\leq\lceil\log |\delta C|\rceil}s_i2^i = O(|\delta C|\log|\delta C|)$.

Since mini clusters of $\mathcal C_1(C)$ intersect only in boundary vertices, the union of stars in any set $\mathcal S_i$ is a minor of $C$
and hence excludes $K_h$ as a minor. In particular, this union is sparse so $s_i2^i \leq c(s_i + |\delta C|)$,
for some constant $c$ independent of $i$. Hence, there is a constant $I$ such that
$s_i2^i\leq 2c|\delta C|$ for all $i\geq I$, implying that $\sum_{i\geq I} s_i2^i = O(|\delta C|\log|\delta C|)$.
For $i < I$, consider a star $S\in \mathcal S_i$ and let $C_S$ be the mini cluster that yielded $S$ in the procedure above.
There is a pair of distinct leaves $b_1$ and $b_2$ of $S$ such that $(b_1,b_2)$ is associated with $C_S$. Note that
there is a path from $b_1$ to $b_2$ in $C_S$ such that no interior vertex of this path belongs to any other mini
cluster. Thus, picking such a pair for each
$S\in \mathcal S_i$ and regarding it as an edge, we obtain a minor $H$ of $C$ with vertex set contained in $\delta C$. Since $H$ excludes $K_h$ as a minor (otherwise $C$ would not), we get
$s_i2^i \leq s_i2^I = |E(H)|2^I = O(|\delta C|)$.
\end{proof}
Combining this lemma with Lemma~\ref{Lem:RMain} and the definition of $\ell$-clustering, we get the following corollary.
\begin{corollary}\label{Cor:BoundSzMiniCluster}
With the set $\mathcal C$ of mini clusters defined above, if $G$ excludes $K_h$ as a depth $O(\ell\log n)$-minor,
$\sum_{C\in\mathcal C}|\delta C| = \tilde O(n/\sqrt\ell)$ and
$\sum_{C\in\mathcal C}|C||\delta C| = \tilde O(n\sqrt{\ell})$.
\end{corollary}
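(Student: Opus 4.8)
The plan is to derive both bounds by combining Lemma~\ref{Lem:MiniCluster}, applied to each cluster of the $\ell$-clustering $\mathcal C'$, with the size guarantees for $\ell$-clusterings recorded in Lemma~\ref{Lem:RMain} and in the definition of an $r$-clustering. Throughout I would partition the set $\mathcal C$ of mini clusters according to which cluster $C'\in\mathcal C'$ they arise from, writing $\mathcal C(C')$ for the set of mini clusters of $C'$, so that $\mathcal C=\bigcup_{C'\in\mathcal C'}\mathcal C(C')$.

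For the first sum, Lemma~\ref{Lem:MiniCluster} gives $\sum_{C\in\mathcal C(C')}|\delta C| = O(|\delta C'|\log|\delta C'|)$, which I would simply bound by $O(|\delta C'|\log n)$ using $|\delta C'|\le n$ (and noting the factor is harmless when $|\delta C'|$ is $0$ or $1$). Summing over $C'\in\mathcal C'$ and using that the clusters of an $\ell$-clustering have a total of $\tilde O(n/\sqrt\ell)$ boundary vertices counted with multiplicity, I obtain $\sum_{C\in\mathcal C}|\delta C| = O(\log n)\sum_{C'\in\mathcal C'}|\delta C'| = \tilde O(n/\sqrt\ell)$.

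For the second sum I would additionally use the observation that every mini cluster of $C'$ is a subgraph of $C'$, hence $|C|\le|C'|$ for each $C\in\mathcal C(C')$. Therefore $\sum_{C\in\mathcal C(C')}|C||\delta C|\le |C'|\sum_{C\in\mathcal C(C')}|\delta C| = |C'|\cdot\tilde O(|\delta C'|)$ by Lemma~\ref{Lem:MiniCluster}, and summing over $C'\in\mathcal C'$ together with $\sum_{C'\in\mathcal C'}|C'||\delta C'| = \tilde O(n\sqrt\ell)$ from Lemma~\ref{Lem:RMain} gives $\sum_{C\in\mathcal C}|C||\delta C| = \tilde O(n\sqrt\ell)$.

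Since both steps are bookkeeping on top of the three cited results, there is no genuine obstacle; the one point worth stating explicitly is that every boundary vertex of a mini cluster of $C'$ lies in $\delta C'$, so that Lemma~\ref{Lem:MiniCluster} really does control all of them — this is immediate from the construction, because distinct subgraphs $C_w$ meet only in $\delta C'$ and the edges forming $\mathcal C_2$ have both endpoints in $\delta C'$.
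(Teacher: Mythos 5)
Your argument is correct and is exactly the combination the paper intends: it applies Lemma~\ref{Lem:MiniCluster} per cluster of $\mathcal C'$ and then sums using Lemma~\ref{Lem:RMain} and the $\tilde O(n/\sqrt\ell)$ boundary-vertex bound from the definition of an $\ell$-clustering, which is precisely the one-line derivation the paper leaves implicit. Your closing remark that all boundary vertices of mini clusters lie in $\delta C'$ (so Lemma~\ref{Lem:MiniCluster} indeed controls them) is the right detail to make explicit.
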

Our algorithm will reject and halt if the bounds of Corollary do not hold as then $G$ contains $K_h$ as a depth
$O(\ell\log n)$-minor.

\subsection{Implementing the generic algorithm}
The implementation of $\texttt{genericalg}$ is in many ways similar to our first algorithm so we only highlight the differences here.
Let $S$ be the graph consisting of the union of spanners $S(C)$ over all mini clusters $C\in\mathcal C$,
where the stretch is chosen to be $6/\epsilon$.
We maintain data structures $\mathcal D_i$ for $1\leq i\leq h - 1$ with $d$ of order $\ell\ln n$ in Lemma~\ref{Lem:DistOracle}
and stretch $6/\epsilon$ and each structure is initialized
with graph $S$.
By Lemma~\ref{Lem:DistOracle} and Corollary~\ref{Cor:BoundSzMiniCluster}, this takes
$O(|E(S)|^{1+\epsilon/3}\ell) = \tilde O((n/\sqrt\ell)^{(1 + \epsilon/3)(1+\epsilon/3)}\ell) = O(n^{1+\epsilon}\sqrt\ell)$ time
over all deletions. We will use these structures to identify trees in $G[V\setminus(M\cup A)]$. Note that the stretch of the paths obtained from the structures is
$36/\epsilon^2 = O(1)$.

A problem with this approach is that $G[V\setminus(M\cup A)]$ may contain
parts of mini clusters of $\mathcal C$. Approximate paths in a part of a mini cluster $C$ might not be represented by a subgraph of
$S(C)$ but
each structure $\mathcal D_i$ only supports edge deletions, not insertions. We handle this by requiring the following invariant:
for any mini cluster $C$, if $V(C)\cap V\setminus(M\cup A)\neq\emptyset$ then $V(C)\subseteq V\setminus(M\cup A)$.
If we can maintain this invariant,
at any point, distances in $G[V\setminus(M\cup A)]$ between boundary vertices of $\mathcal C$ can be approximated in a union of spanners
of a subset of the mini clusters in $\mathcal C$. Hence these distances
can be approximated by the structures $\mathcal D_i$ if
we delete all edges of $S(C)$ from them whenever a mini cluster $C$ leaves $G[V\setminus(M\cup A)]$.

In a given iteration, if a tree $T$ is found (line $7$) by a data structure $\mathcal D_i$, it consists of edges from spanners
$S(C)$. By precomputing shortest path trees from each boundary vertex of each mini cluster (which can be done
in $\tilde O(n\sqrt\ell)$ time by Corollary~\ref{Cor:BoundSzMiniCluster}), we can identify the edges of $G$ that are
contained in $T$ in time proportional to their number. Now, to ensure the invariant, we expand $T$ into the
mini clusters it intersects.
More precisely, for each mini cluster $C\in\mathcal C$ for which $V(T)\cap \delta C\neq\emptyset$, we expand $T$ to include all interior vertices of $C$ but no vertices in $\delta C$ that do not already belong to $T$.
This is possible since by definition of mini clusters, there is a spanning tree of $C$ in which every boundary vertex of $C$ is
a leaf.
This tree expansion ensures the invariant but it comes at a cost of a worse separator size guarantee
since trees of the final set $\mathcal M$ may be larger than in our first algorithm. Consider one such tree.
It was obtained by
expanding a tree $T$ of size $O(\ell\log n)$ into the mini clusters of $\mathcal C$ it intersects. By our degree bound,
each vertex of $T$ is contained in at most $\Delta$ mini clusters each having size $O(\ell)$. Hence, each tree in $\mathcal M$ has
size $O(\ell^2\Delta\log n) = O(\ell\sqrt n\log n)$.

We have shown how trees are efficiently found in our implementation of $\texttt{genericalg}$. Lines $10$--$17$ are
implemented exactly as for the first algorithm, where we do BFS in parallel in $G[V']$ from two vertices that are far apart. Note that when these lines are executed, no updates are made to $M\cup A$ (vertices may be removed from $M$
if they are no longer incident to $V'$ but in that case they are added to $A$) so our data structures
$\mathcal D_i$ do not require updates here. The total time for lines $10$--$17$ is $O(m + n\log n)$ as for our first
algorithm. We can now conclude with our final main result.
\begin{theorem}\label{Thm:Algo3}
Given a graph $G$ with $n$ vertices and given $h,\ell\in\mathbb N$ and constants $\epsilon,\epsilon' > 0$ such that with $h = O(1)$ and $\ell = \Omega(n^{\epsilon'})$, there is an algorithm with
$O(n^{1+\epsilon}\sqrt\ell)$ running time which either correctly reports that $G$ contains $K_h$ as a depth $O(\ell\log n)$-minor or
outputs a separator of $G$ of size $O(n/\ell) + \tilde O(\ell\sqrt n)$.
\end{theorem}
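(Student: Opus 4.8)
The plan is to show that the procedure above is a faithful implementation of \texttt{genericalg} with a suitable constant $C\geq 1$ in line~$7$, and then to combine the correctness proof of Section~\ref{sec:Generic} with the size and time accounting of the present section. We may assume $\ell=O(\sqrt n)$, since otherwise $\tilde O(\ell\sqrt n)=\omega(n)$ and the whole vertex set is a separator of the stated size; this also ensures $\Delta=\sqrt n/\ell\geq 1$ and that the hypotheses of Lemma~\ref{Lem:NestedrClusteringMain} and Corollary~\ref{Cor:BoundSzMiniCluster} hold.

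\paragraph{Correctness.}
First I would verify that the mini-cluster invariant ``$V(C)\cap(V\setminus(M\cup A))\neq\emptyset\Rightarrow V(C)\subseteq V\setminus(M\cup A)$'' is preserved: it holds initially, it is re-established by expanding each newly found tree $T$ into every mini cluster it meets (possible because each mini cluster has a spanning tree in which all its boundary vertices are leaves), and lines~$10$--$17$ never enlarge $M\cup A$. Since $A$ only grows and vertices leaving $M$ enter $A$, the set $V\setminus(M\cup A)$ shrinks monotonically, so each mini cluster exits it at most once and $S(C)$ is deleted from each $\mathcal D_i$ at most once. Under the invariant, the union $S$ of the spanners of the surviving mini clusters preserves the relevant boundary-to-boundary distances of $G[V\setminus(M\cup A)]$ up to a factor $6/\epsilon$, each $\mathcal D_i$ preserves distances of $S$ up to a further factor $6/\epsilon$, and unfolding a reported $S$-path through the precomputed in-cluster shortest path trees gives a genuine $G$-path with the same number of edges. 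Choosing $d=\Theta(\ell\ln n)$ large enough that $d/(36/\epsilon^2)-1>\rho$, the case analysis mirrors the first algorithm: in case~$1$ we obtain a tree rooted at $u$, incident in $G$ to all proper trees of $\mathcal M$, of radius $\leq d$ before and $\leq d+\ell=O(\ell\log n)$ after expansion into mini clusters---hence meeting line~$7$ with $C=O(1)$; in case~$2$, the analogue of Lemma~\ref{Lem:ShortLongPath} gives a vertex $v\in V'$ incident in $G$ to the offending $T_i$ with $d_{G[V']}(u,v)>\rho$, meeting line~$10$. Thus the assumption in line~$9$ always holds, so by Section~\ref{sec:Generic} the line-$18$ output is a genuine depth-$O(\ell\log n)$ $K_h$-minor and the line-$19$ output is a separator of $G$ minus $V_\Delta$; appending $V_\Delta$ makes $M\cup B\cup V_\Delta$ a separator of $G$. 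Whenever the sparsity test (Corollary~\ref{Cor:SparseMain}), the clustering step (Lemma~\ref{Lem:NestedrClusteringMain}), or the size bounds of Corollary~\ref{Cor:BoundSzMiniCluster} fail, the algorithm halts and correctly reports a shallow $K_h$-minor, since each failure implies $G\notin\mathcal G_{h,O(\ell\log n)}$.

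\paragraph{Size and running time.}
For the separator size: $|V_\Delta|=O(n/\Delta)=O(\ell\sqrt n)$; each of the $O(h)=O(1)$ proper trees of $\mathcal M$ comes from a radius-$O(\ell\log n)$, $O(\ell\log n)$-vertex tree, expanded into mini clusters, so---using the degree bound $\Delta$ and $|C|=O(\ell)$ per mini cluster---it has $O(\ell^2\Delta\log n)=O(\ell\sqrt n\log n)$ vertices, giving $|M|=\tilde O(\ell\sqrt n)$; and the standard \texttt{genericalg} accounting gives $|B|=O(n/\ell)$. Hence the separator has size $O(n/\ell)+\tilde O(\ell\sqrt n)$. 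For the time I would itemize: the sparsity test and removal of $V_\Delta$ cost $O(m)=O(n)$ (Corollary~\ref{Cor:SparseMain}, $h=O(1)$); the $\ell$-clustering costs $O(\sqrt\ell\,m^{1+\epsilon})=O(\sqrt\ell\,n^{1+\epsilon})$ (Lemma~\ref{Lem:NestedrClusteringMain} with $r=\ell$); forming the mini clusters, their dense distance graphs, the spanners (Lemma~\ref{Lem:SpannerMain}), and the shortest path trees from all boundary vertices costs $\tilde O(n\sqrt\ell)$ (Lemmas~\ref{Lem:RMain} and~\ref{Lem:MiniCluster}, Corollary~\ref{Cor:BoundSzMiniCluster}), with $|E(S)|=\tilde O((n/\sqrt\ell)^{1+\epsilon/3})$; maintaining all $h-1=O(1)$ oracles under their at-most-$|E(S)|$ total deletions costs $O(\ell\,|E(S)|^{1+\epsilon/3})=\tilde O((n/\sqrt\ell)^{(1+\epsilon/3)^2}\ell)=O(n^{1+\epsilon}\sqrt\ell)$ (Lemma~\ref{Lem:DistOracle} with $2k-1=6/\epsilon$, $d=\Theta(\ell\log n)$); and all per-iteration query, unfolding, tree-expansion, and tree-departure work, together with lines~$10$--$17$, costs $O(m+n\log n)=\tilde O(n)$ exactly as in the first algorithm (each iteration discards $\Omega(\ell)$ vertices from $V'$, so there are $O(n/\ell)$ iterations). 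Summing gives $O(n^{1+\epsilon}\sqrt\ell)$.

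\paragraph{Main obstacle.}
The delicate point is the interface between the decremental oracles and \texttt{genericalg}: maintaining the all-in-or-all-out invariant for mini clusters using only edge deletions (never insertions), charging those deletions to the monotone contraction of $V\setminus(M\cup A)$, and confirming that the $O(1)$ composite stretch together with expansion into mini clusters still produces trees of radius $O(\ell\log n)$---so that the reported minor genuinely has depth $O(\ell\log n)$ rather than being inflated by a factor of $\ell$. The rest is the accounting sketched above.
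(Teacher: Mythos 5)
Your proposal is correct and follows essentially the same route as the paper: sparsity test, removal of the high-degree set $V_\Delta$ with $\Delta=\sqrt n/\ell$, an $\ell$-clustering refined into mini clusters, spanners of their dense distance graphs fed to the decremental oracles, the all-in-or-all-out invariant maintained by expanding each found tree into the mini clusters it meets, and the same size and time accounting. The few points you make explicit (the WLOG $\ell=O(\sqrt n)$ reduction, the composite stretch condition $d/(36/\epsilon^2)-1>\rho$, and the case-$2$ analogue of Lemma~\ref{Lem:ShortLongPath}) are just spelled-out versions of steps the paper leaves implicit.
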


\section{Concluding Remarks}\label{sec:ConclRemarks}
We gave three new algorithms to find separators of shallow-minor free graphs. For small-depth minors, a speed-up in running
time of almost a linear factor is achieved compared to previous algorithms while still giving separators of the same size.
A main idea was an application of a dynamic approximate distance oracle of Roditty and Zwick for general graphs. We believe
our speed-ups should give improved algorithms for static graph problems such as shortest paths and maximum matching since
one of the bottlenecks for many separator-based algorithms is finding a good separator. For instance, a faster separator theorem
for minor-free graphs led to several improved static graph algorithms~\cite{CWN}.

Can our bounds be improved further? Nothing suggests that the dependencies on $\ell$ in our time bounds are natural. The
bottleneck is the dynamic distance oracle of Roditty and Zwick. Any improvement of that result would give a speed-up of our algorithms as well.


\newpage
\appendix

\section{Details on the second algorithm}\label{sec:Spanner2}
We now give more details of the algorithm in Section~\ref{sec:Spanner}. Although it is essentially the same as
that in~\cite{CWN}, we include it to make the paper
more self-contained and since we will use a variant of this idea
for the algorithm in Section~\ref{sec:SpannerDistOracle}.

\subsection{Nested $r$-Clustering}
For better readability, we shall repeat the definitions and results from Section~\ref{sec:SpannerDistOracle} here.
For a connected graph $G = (V,E)$, a \emph{cluster} (of $G$) is a connected subgraph of $G$. A \emph{clustering} (of $G$) is a
collection $\mathcal C$ of clusters of $G$ whose edge sets form a partition of $E$.
A \emph{boundary vertex} of a cluster $C\in\mathcal C$ is
a vertex that $C$ shares with other clusters in $\mathcal C$. All other vertices of $C$ are
\emph{interior vertices} of $C$. For a subgraph $C'$ of $C$, we let $\delta C'$ denote the set of boundary vertices of $C$
contained in $C'$ and we refer to them as the boundary vertices of $C'$ (w.r.t.\ $C$).

Let $n$ be the number of vertices of $G$. For a parameter $r > 0$, an \emph{$r$-clustering} (of $G$) is
a clustering with clusters having a total of $\tilde O(n/\sqrt r)$ boundary vertices (counted with multiplicity) and each
containing at most $r$ vertices and $\tilde O(\sqrt r)$ boundary vertices. The total vertex size of clusters
in an $r$-clustering is $n + \tilde O(n/\sqrt r)$ and the number of clusters is $\tilde O(n/\sqrt r)$.

Define a \emph{nested $r$-clustering} of $G$ as follows. Start with an $r$-clustering. Partition each cluster $C$
with a separator of size $O(\sqrt{|V(C)|})$ (we assume such a separator exists) and recurse until clusters
consisting of single edges are obtained. We do it in such a way that both the sizes of clusters and their number of boundary vertices
go down geometrically through the recursion (see Theorem $5$ in~\cite{MultiWeightSep}).

The nested $r$-clustering is the collection $\mathcal C$ of all clusters obtained by this recursive procedure. There are $O(\log r)$
recursion levels and we refer to clusters of $\mathcal C$ at level $i\geq 1$ as \emph{level $i$-clusters} of
$\mathcal C$. Note that level $1$-clusters are those belonging to the original $r$-clustering. We define parent-child
relationships between clusters of $\mathcal C$ in the natural way defined by the recursion.
The following lemma shows the existence of a (nested) $r$-clustering for shallow minor-free graphs and gives an efficient algorithm
to find it. It is done by recursive application of Theorem~\ref{Thm:Algo1}.
\begin{lemma}\label{Lem:NestedrClustering}
Let $G$ be a vertex-weighted graph with $m$ edges and $n$ vertices, let $h,\ell\in\mathbb N$ and $0 < \epsilon < 1$ where $h$ and $\epsilon$
are constants and where $\ell = O(\sqrt{n})$ and $\ell = \Omega(n^{\epsilon})$.
For any parameter $r\in(C\log n,\ell]$ for a sufficiently large constant $C$, there is an algorithm with
$O(\sqrt rm^{1 + \epsilon})$ running time that either gives a certificate that $G$ contains $K_h$ as a depth
$O(\ell\log n)$-minor or outputs a nested $r$-clustering of $G$.
\end{lemma}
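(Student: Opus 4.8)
The plan is to produce the nested $r$-clustering top-down in two stages. First I apply Lemma~\ref{Lem:NestedrClusteringMain} to $G$ with the given $h,\ell,\epsilon$ and parameter $r$; in $O(\sqrt r\,m^{1+\epsilon})$ time this either returns a certificate of $K_h$ as a depth-$O(\ell\log n)$ minor of $G$ (and we stop), or an $r$-clustering whose members I take to be the level-$1$ clusters. In the second stage I refine each cluster recursively: given a cluster $C$ that is not a single edge, I compute a balanced separator $S_C$ of $C$ of size $\tilde O(\sqrt{|V(C)|})$, delete it, add it to the boundary, and recurse on the components of $C[V(C)\setminus S_C]$; single-edge clusters are the leaves. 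The collection $\mathcal C$ of all clusters so produced, together with the parent-child relation given by the recursion, is the nested $r$-clustering, and at each of its $O(\log r)$ levels the clusters partition $E(G)$, as required.

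To compute $S_C$ I call Theorem~\ref{Thm:Algo1} on $C$ with the same $h$ and $\epsilon$ and with depth $\ell_C=\lceil\sqrt{n_C/\log n_C}\rceil$, where $n_C=|V(C)|$. Since $C$ is a subgraph of $G$ it excludes $K_h$ as a depth-$\ell$ minor, and since $2\le n_C\le r\le\ell$ we have $1\le\ell_C\le\ell$, so by Lemma~\ref{Lem:ShallowInclusions} $C$ also excludes $K_h$ as a depth-$\ell_C$ minor. Hence Theorem~\ref{Thm:Algo1} returns either a certificate of $K_h$ as a depth-$O(\ell_C\log n_C)$ minor of $C$ — which is a depth-$O(\ell\log n)$ minor of $G$ and can be reported — or a separator of size $O(h^2\ell_C\log n_C+n_C/\ell_C)=\tilde O(\sqrt{n_C})$, using $h=O(1)$. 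So that both the cluster sizes and the boundary-vertex counts of the children decrease geometrically, I do not take an arbitrary vertex-balanced separator but run Theorem~\ref{Thm:Algo1} against the two weight functions given by the all-ones function and by the indicator of the current boundary vertices, and combine them by the standard reduction of Djidjev and Gilbert (Theorem~$5$ of~\cite{MultiWeightSep}), at the cost of a constant factor. This gives, at recursion level $i$, clusters of size $O(r\gamma^{i})$ and boundary-vertex count $\tilde O(\sqrt r\,\gamma^{i})$ for some constant $\gamma<1$, with total vertex size $O(n)$ per level.

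For the running time, the first stage costs $O(\sqrt r\,m^{1+\epsilon})$ by Lemma~\ref{Lem:NestedrClusteringMain}. In the second stage, each cluster $C$ is a subgraph of $G$ on $n_C\le\ell$ vertices that excludes $K_h$ as a depth-$\ell$ minor, so Corollary~\ref{Cor:SparseMain} (applied with $n_C$ in place of $n$, legitimate since $\ell\ge n_C$ is polynomially large in $n_C$) shows $C$ has $O(\mbox{poly}(h)\,n_C)=O(n_C)$ edges; hence Theorem~\ref{Thm:Algo1} processes $C$ in time $O(h\ell_C n_C^{1+\epsilon}+h^2n_C\log n_C)=\tilde O(n_C^{3/2+\epsilon})$. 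Summing over level $i$ with $n_C=O(r\gamma^{i})$ and $\sum_C n_C=O(n)$ gives $\tilde O((r\gamma^{i})^{1/2+\epsilon}n)$, and summing this geometric series over the $O(\log r)$ levels yields $\tilde O(r^{1/2+\epsilon}n)=O(\sqrt r\,n^{1+\epsilon})=O(\sqrt r\,m^{1+\epsilon})$; here the polylogarithmic factors and the $r^{\epsilon}\le n^{\epsilon/2}$ term are absorbed into the $n^{\epsilon}$ slack by running Theorem~\ref{Thm:Algo1} with a slightly smaller constant in place of $\epsilon$, and $n=O(m)$ since isolated vertices carry no edges and are handled trivially. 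Adding the two stages gives the claimed bound.

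I expect the delicate part to be the structural guarantee of the second stage — that the number of boundary vertices, not just the cluster size, shrinks by a constant factor per level — because a separator balanced only with respect to vertex count can leave almost all boundary vertices on one side; this is exactly what the two-weight call to Theorem~\ref{Thm:Algo1} via~\cite{MultiWeightSep} fixes, and the care goes into checking that its constant-factor overhead does not spoil the $\tilde O(\sqrt{n_C})$ separator size and that the per-level sums such as $\sum_C|C|\,|\delta C|$ and $\sum_C|\delta C|^3$ then telescope as needed later.
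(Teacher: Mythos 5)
Your second stage---refining each cluster of an $r$-clustering by repeatedly applying Theorem~\ref{Thm:Algo1} with depth roughly $\sqrt{|V(C)|}$, using sparsity of clusters (Corollary~\ref{Cor:SparseMain}) and the multi-weight separator reduction of~\cite{MultiWeightSep} so that cluster sizes and boundary counts decrease geometrically, then summing a geometric series over the $O(\log r)$ levels---is essentially how the paper turns an $r$-clustering into a nested one, and that part of your argument is sound (modulo the small imprecision that children must be taken as $G[V(\mbox{component})\cup(\mbox{incident separator vertices})]$, with boundary--boundary edges assigned to a single child, so that each level still partitions $E$).

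The problem is your first stage. You obtain the $r$-clustering by invoking Lemma~\ref{Lem:NestedrClusteringMain}, but that lemma is not an independent ingredient: it is the main-text restatement of (the non-nested part of) the very lemma you are proving, and its only justification in the paper is the proof of Lemma~\ref{Lem:NestedrClustering} itself. So your argument is circular, and what it omits is precisely the bulk of the paper's proof: constructing the $r$-clustering directly from Theorem~\ref{Thm:Algo1}. Concretely, with $\epsilon'=\epsilon/3$ the paper applies Theorem~\ref{Thm:Algo1} with depth $\ell'=r^{\frac 1 2-\epsilon'}n^{\epsilon'}\leq\ell$ (so, by Lemma~\ref{Lem:ShallowInclusions}, a certificate of a depth $O(\ell'\log n)$-minor is also one for depth $O(\ell\log n)$), obtaining a separator of size $O(n^{1-\epsilon'}/r^{\frac 1 2-\epsilon'})$ in time $O(r^{\frac 1 2-\epsilon'}n^{\epsilon'}m^{1+\epsilon'})$; it forms $H_C=G[V(C)\cup B_C]$ for each component $C$ of $G\setminus S$ and recurses on those with more than $r$ vertices, solving a recursion of the form $T(m,n)\leq d'r^{\frac 1 2-\epsilon'}m^{1+2\epsilon'}+T(m_1,n_1)+T(m_2,n_2)$ to get total time $O(r^{\frac 1 2-\epsilon'}m^{1+2\epsilon'}\log n)$ and total boundary count $\tilde O(n/\sqrt r)$; and it then handles clusters with more than $\tilde O(\sqrt r)$ boundary vertices by reapplying Theorem~\ref{Thm:Algo1} with the vertex weight distributed evenly on the boundary vertices and recursing. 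Without this construction (or some substitute that does not appeal to Lemma~\ref{Lem:NestedrClusteringMain}), your proposal does not establish the lemma.
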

\begin{proof}
The proof is essentially the same as in~\cite{CWN} except that we use Theorem~\ref{Thm:Algo1} to find the separators. We therefore
only give a sketch in the following.

We first show how to find an $r$-clustering within the time bound claimed in the lemma.
Let $\epsilon' = \epsilon/3$ and apply Theorem~\ref{Thm:Algo1} with $\ell' = r^{\frac 1 2 - \epsilon'}n^{\epsilon'}$
instead of $\ell$ and $\epsilon'$ instead of $\epsilon$ to obtain a separator $S$ of size
$O(n/\ell') = O(n^{1 - \epsilon'}/r^{\frac 1 2 - \epsilon'})$ in time $O(r^{\frac 1 2 - \epsilon'}n^{\epsilon'}m^{1+\epsilon'})$.
Note that $\ell' < \ell$ so by Lemma~\ref{Lem:ShallowInclusions}, if the algorithm in Theorem~\ref{Thm:Algo1} gives a certificate of a depth
$O(\ell'\log n)$-minor, our algorithm can output this as a certificate of a depth $O(\ell\log n)$-minor.

For each component $C$ of $G\setminus S$, let $B_C$ be the vertices of $S$ incident to $C$ in $G$, and let $H_C = G[V(C)\cup B_C]$.
If an edge belongs to more than one such graph $H_C$ (which can only happen when both its endpoints are boundary vertices), it is removed from
every subgraph except one. Then recurse on every $H_C$ that has more than $r$ vertices. In a recursive call, $n$ above is replaced by the
number of vertices in the current graph.

The result is a clustering $\mathcal C$ of $G$ where each cluster contains at most $r$ vertices and, solving a simple recursive formula,
the total number of boundary vertices
over all clusters (counted with multiplicity) is $\tilde O(n/\sqrt r)$. The time to find this clustering can be expressed by the
following recursion formula, for some constant $d'$:
\[
  T(m,n) \leq d'(r^{\frac 1 2 - \epsilon'}m^{1+2\epsilon'}) + T(m_1,n_1) + T(m_2,n_2),
\]
where $m_1 + m_2 = m$ and $n_1,n_2 < c'n$ for some constant $c' < 1$. It is easy to show that
$T(m,n) = O(r^{\frac 1 2 - \epsilon'}m^{1 + 2\epsilon'}\log n)$.

To get an $r$-clustering, we apply Theorem~\ref{Thm:Algo1} to each cluster $C\in \mathcal C$ that does not have
$\tilde O(\sqrt r)$ boundary vertices, with vertex weights evenly distributed on the boundary vertices of $C$,
and recurse on sub-clusters as above.
Using the same analysis as in~\cite{CWN}, it follows that this gives an $r$-clustering of $G$ within the above time bound.

From the $r$-clustering, we can obtain a nested $r$-clustering by applying Theorem~\ref{Thm:Algo1} to each cluster $C$ with
$\ell' = \sqrt{|V(C)|}$ instead of $\ell$ and recursing on the sub-clusters obtained. Since $\ell'\leq\sqrt r$,
the total time for this is also within that stated in the lemma.
\end{proof}

If we pick $r = \ell$ in Lemma~\ref{Lem:NestedrClustering}, each cluster $C$ in a nested $r$-clustering of $G$ must
exclude $K_h$ as a minor if $G$ excludes $K_h$ as a minor of depth $O(\ell\log n)$ since $|V(C)|\leq \ell$. As shown in~\cite{SparsityHMinor1,SparsityHMinor2}, an $n$-vertex $K_h$-minor-free graph contains only $O(nh\sqrt{\log h})$ edges.
Since level $1$-clusters define a partition of $E$ and their total vertex size
is $n + \tilde O(n/\sqrt r)$, we get the following corollary.
\begin{corollary}\label{Cor:Sparse}
Let $\epsilon > 0$ be a constant. If $\ell = \Omega(n^\epsilon)$, any $n$-vertex graph excluding
$K_h$ as a depth $\ell$-minor has only $O(\mbox{poly}(h)n)$ edges.
\end{corollary}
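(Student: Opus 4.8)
The plan is to apply Lemma~\ref{Lem:NestedrClustering} with the clustering parameter $r$ set equal to the minor depth $\ell$ and then bound $|E(G)|$ by summing the classical extremal bound for $K_h$-minor-free graphs over the level-$1$ clusters. First I would reduce to the regime $\ell = O(\sqrt n)$ required by Lemma~\ref{Lem:NestedrClustering}: if $\ell > \sqrt n$ then, since $\lfloor\sqrt n\rfloor \le \ell$, Lemma~\ref{Lem:ShallowInclusions} shows $G$ also excludes $K_h$ as a depth-$\lfloor\sqrt n\rfloor$ minor, so we may replace $\ell$ by $\lfloor\sqrt n\rfloor$; henceforth $\ell = O(\sqrt n)$, and $\ell = \Omega(n^{\epsilon_0})$ still holds for the constant $\epsilon_0 := \min(\epsilon,1/2)\in(0,1)$. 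In particular $\ell > C\log n$ for $n$ large, so $r = \ell$ is a legal choice of parameter. (The hypothesis $h = O(1)$ in Lemma~\ref{Lem:NestedrClustering} is only there to simplify the running time; the construction it performs, and hence the existence of the nested $\ell$-clustering it outputs, is valid for every $h$, which is all we need.)

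The crux is that the algorithm of Lemma~\ref{Lem:NestedrClustering} cannot output a minor-certificate on our $G$, and therefore returns a nested $\ell$-clustering. Inspecting its proof, every internal call to Theorem~\ref{Thm:Algo1} made during the construction uses a depth parameter $\ell'$ which is $\ell^{1/2-\epsilon_0/3}n^{\epsilon_0/3}$ at the top level and at most $\sqrt\ell$ (namely $\sqrt{|V(C)|}\le\sqrt\ell$) in the later phases, and in recursive calls the role of $n$ is played by ever smaller vertex counts, so $\ell'$ only shrinks. Since $\ell = \Omega(n^{\epsilon_0})$, a short calculation gives $\ell' = O(\ell\,n^{-\delta})$ for a constant $\delta = \delta(\epsilon_0) > 0$, hence $\ell'\log n = o(\ell)$ and the depth $O(\ell'\log n)$ occurring in Theorem~\ref{Thm:Algo1}'s dichotomy is at most $\ell$ for $n$ large. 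Because $G$ excludes $K_h$ as a depth-$\ell$ minor, so does every subgraph of $G$ handed to a recursive call (a minor of a subgraph of $G$ is a minor of $G$, at the same depth), and then Lemma~\ref{Lem:ShallowInclusions} yields that every such graph excludes $K_h$ even as a depth-$O(\ell'\log n)$ minor. Consequently each invocation of Theorem~\ref{Thm:Algo1} returns a separator rather than a certificate, so the whole algorithm returns a nested $\ell$-clustering $\mathcal C$.

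It remains to bound $|E(G)|$. The level-$1$ clusters of $\mathcal C$ form an $\ell$-clustering, so their edge sets partition $E(G)$ and their total vertex size is $n + \tilde O(n/\sqrt\ell) = O(n)$, using $\ell = \Omega(n^{\epsilon_0}) = \omega(1)$. Each level-$1$ cluster $C$ has $|V(C)| \le \ell$ and hence radius at most $\ell - 1$, so if $C$ contained $K_h$ as a minor, contracting connected subgraphs of $C$ would exhibit $K_h$ as a depth-$(\ell-1)$ minor of $C$, hence of $G$, contradicting, via Lemma~\ref{Lem:ShallowInclusions}, that $G$ excludes $K_h$ as a depth-$\ell$ minor. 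Thus each such $C$ is $K_h$-minor-free, so by the Kostochka--Thomason bound~\cite{SparsityHMinor1,SparsityHMinor2} it has $O(h\sqrt{\log h}\,|V(C)|)$ edges, and summing over the level-$1$ clusters,
\[
  |E(G)| = \sum_{C}|E(C)| = O\!\bigl(h\sqrt{\log h}\bigr)\sum_{C}|V(C)| = O\!\bigl(h\sqrt{\log h}\bigr)\cdot O(n) = O(\mbox{poly}(h)\,n),
\]
as claimed.

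I expect the main obstacle to be precisely the gap between the depth $\ell$ of the excluded minor in the hypothesis and the depth $O(\ell\log n)$ at which Theorem~\ref{Thm:Algo1} (and hence Lemma~\ref{Lem:NestedrClustering}) certifies minors: one must check that every depth parameter actually used inside the clustering construction is at most $\sqrt\ell$, up to the single top-level value $\ell^{1/2-\epsilon_0/3}n^{\epsilon_0/3}$, so that the stray $\log n$ factor is dominated by the polynomial slack $\ell = \Omega(n^{\epsilon_0})$. This is exactly where the hypothesis $\ell = \Omega(n^\epsilon)$, rather than merely $\ell = \omega(\log n)$, is needed. Everything else is a routine combination of the $\ell$-clustering size bounds with the classical edge bound for $K_h$-minor-free graphs.
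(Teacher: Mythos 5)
Your proof is correct and follows essentially the same route as the paper: pick the clustering parameter $r=\ell$ in Lemma~\ref{Lem:NestedrClustering}, observe that each level-$1$ cluster has at most $\ell$ vertices and is therefore $K_h$-minor-free, and sum the Kostochka--Thomason edge bound over the level-$1$ clusters, whose edge sets partition $E(G)$ and whose total vertex size is $n+\tilde O(n/\sqrt\ell)$. Your additional care --- reducing to $\ell = O(\sqrt n)$, checking that every internal depth parameter $\ell'$ satisfies $\ell'\log n = o(\ell)$ so the lemma cannot return a certificate, and noting that the $h=O(1)$ hypothesis only affects running time --- merely fills in details the paper leaves implicit.
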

Note that with this corollary, we can replace $m$ by $n$ in all our time bounds (as $\epsilon$ may be picked
arbitrarily small) if we simply make an initial test to check whether the graph is sufficiently sparse and reject
if not. Note though that this will not enable us to always output a shallow minor as certificate in the case where
a good separator could not be found.

Our algorithm in this section is a slightly modified version of $\texttt{genericalg}$, where in line $7$, we look for a tree in
$G[V']$. This is what is done in~\cite{ShallowMinor}; the algorithm is correct since if a tree cannot be found, a distant pair in
line $10$ must exist.
We will use a nested $r$-clustering together with so called $X$-clusters (defined below) to obtain a compact
sublinear size graph representing $G[V']$. We will then apply Dijkstra on this compact graph
to identify trees efficiently in algorithm $\texttt{genericalg}$.

The following lemma will prove useful later on. It is essentially the same as one from~\cite{CWN} so we omit the proof.
\begin{lemma}\label{Lem:R}
For a nested $\ell$-clustering $\mathcal C$, $\sum_{C\in\mathcal C}|C||\delta C|, \sum_{C\in\mathcal C}|\delta C|^3 = \tilde O(n\sqrt{\ell})$.
\end{lemma}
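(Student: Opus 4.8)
The plan is to prove both bounds by summing over the $O(\log\ell)$ recursion levels of the nested $\ell$-clustering, exploiting at each level that clusters are geometrically small while the total boundary-vertex count at that level is correspondingly small. First I would record the structural properties of a nested $\ell$-clustering that come out of its construction (the recursive application of the multiple-weight separator theorem of~\cite{MultiWeightSep}, exactly as in~\cite{CWN}): writing $\mathcal C_i$ for the set of level-$i$ clusters, there is a geometrically decreasing sequence $\ell = r_1 > r_2 > \dots > r_L = O(1)$, with $r_{i+1}\le\alpha r_i$ for a fixed constant $\alpha<1$ and $L=O(\log\ell)$, such that every level-$i$ cluster $C$ satisfies $|C|\le r_i$ and $|\delta C| = \tilde O(\sqrt{r_i})$, and such that $\sum_{C\in\mathcal C_i}|\delta C| = \tilde O(n/\sqrt{r_i})$. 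The first two facts express that cluster sizes and boundary counts go down geometrically through the recursion (the defining property of a nested clustering); the third is the usual total-boundary-vertex bound for an $r$-division-style decomposition; both are established in~\cite{CWN}.

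Given these, the first bound follows at once. For each level $i$,
\[
  \sum_{C\in\mathcal C_i}|C|\,|\delta C| \;\le\; r_i\sum_{C\in\mathcal C_i}|\delta C| \;=\; r_i\cdot\tilde O(n/\sqrt{r_i}) \;=\; \tilde O(n\sqrt{r_i}) \;\le\; \tilde O(n\sqrt\ell),
\]
and summing over the $L=O(\log\ell)$ levels---or, more sharply, noting that the $\sqrt{r_i}$ decay geometrically so $\sum_i\sqrt{r_i}=\Theta(\sqrt\ell)$---gives $\sum_{C\in\mathcal C}|C|\,|\delta C| = \tilde O(n\sqrt\ell)$.

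For the cube sum the only additional idea is to estimate two of the three copies of $|\delta C|$ by the level-wise bound: for $C\in\mathcal C_i$, $|\delta C|^3 = |\delta C|^2\cdot|\delta C| = \tilde O(r_i)\cdot|\delta C|$. Hence
\[
  \sum_{C\in\mathcal C_i}|\delta C|^3 \;=\; \tilde O(r_i)\sum_{C\in\mathcal C_i}|\delta C| \;=\; \tilde O(r_i)\cdot\tilde O(n/\sqrt{r_i}) \;=\; \tilde O(n\sqrt{r_i}) \;\le\; \tilde O(n\sqrt\ell),
\]
and summing over levels yields $\sum_{C\in\mathcal C}|\delta C|^3 = \tilde O(n\sqrt\ell)$. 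It is essential here to use $|\delta C| = \tilde O(\sqrt{r_i})$ rather than the trivial $|\delta C|\le|C|\le r_i$: the latter would only give $\tilde O(n\,\ell^{3/2})$ at level $1$.

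The work is thus concentrated entirely in the preliminary step, namely establishing the three per-level facts with a geometrically shrinking $r_i$. The delicate point there is that the refinement step must balance the vertex count and the current boundary-vertex count of a cluster simultaneously, so that the separator vertices accumulated over the recursion do not blow up a cluster's boundary; this is precisely why the multiple-weight separator theorem of~\cite{MultiWeightSep} is invoked, and it is carried out in~\cite{CWN}. Once those facts are in place, the lemma is just the two one-line computations above plus a geometric series over the $O(\log\ell)$ levels.
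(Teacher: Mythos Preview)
Your argument is correct. The paper itself omits the proof entirely, simply remarking that the lemma is essentially the same as one in~\cite{CWN}; your sketch fills in exactly the kind of per-level calculation one would expect there, deferring the structural properties of the nested clustering (geometric decay of $r_i$, per-cluster boundary $\tilde O(\sqrt{r_i})$, and total level-$i$ boundary $\tilde O(n/\sqrt{r_i})$) to the same reference and then reducing both sums to $\sum_i \tilde O(n\sqrt{r_i}) = \tilde O(n\sqrt\ell)$.
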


\subsection{Decomposing subgraphs}
In this subsection, we show how to decompose $G[V']$ into a small number of canonical clusters (which we refer to as $X$-clusters).
Each such cluster will be represented compactly, allowing us to run Dijkstra in sublinear time. We first describe how to use these
clusters to maintain vertex weights of components considered by $\texttt{genericalg}$.

For a given set $X\subseteq V(G)$, define a subset
$\mathcal C_X\subseteq\mathcal C$, obtained by the following procedure. Initialize $\mathcal C_X$ to be the set of
level $1$-clusters of $\mathcal C$. As long as there exists a cluster of $\mathcal C_X$ with
at least one interior vertex belonging to $X$, replace it in $\mathcal C_X$ by its child clusters from $\mathcal C$.

\begin{lemma}\label{Lem:RX}
If $\ell = O(\sqrt{n})$, $\ell = \Omega(n^\epsilon)$, and $|X| = \tilde O(n/\ell)$ then $\mathcal C_X$ consists of clusters sharing only boundary vertices, all vertices of $X$ are boundary vertices in $\mathcal C_X$, and
$\sum_{C\in\mathcal C_X}|\delta C| = \tilde O(n/\sqrt{\ell})$.
\end{lemma}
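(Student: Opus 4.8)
The plan is to establish the three assertions in turn; the first two are essentially formal, and the third carries the combinatorial content.

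First I would observe that $\mathcal C_X$ is itself a clustering of $G[V']$. This follows by induction on the construction: the level-$1$ clusters of the nested clustering $\mathcal C$ are exactly the original $\ell$-clustering, whose edge sets partition $E(G[V'])$, and each step replaces a cluster by its child clusters in $\mathcal C$, whose edge sets partition the edge set of the parent; hence the edge sets of the current collection always partition $E(G[V'])$. Since any two clusters of a clustering can share only boundary vertices, the first assertion is immediate. For the second, note that each replacement step strictly decreases the sizes of the clusters involved (a child cluster is a proper subgraph of its parent), so the construction terminates, at the latest when every cluster is a single edge. At termination no cluster has an interior vertex lying in $X$ (otherwise the construction would not have stopped), i.e.\ every vertex of $X$ is a boundary vertex of $\mathcal C_X$. (A vertex of $X$ of degree one is an exceptional case, since its single incident edge forms a cluster in which it is interior; such vertices are few and can be eliminated by preprocessing or simply added to the separator directly.)

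For the size bound, write $\mathcal A$ for the set of level-$1$ clusters that are never opened and $\mathcal B := \mathcal C_X\setminus\mathcal A$ for the remaining clusters of $\mathcal C_X$. By the definition of an $\ell$-clustering (with $r=\ell$), the level-$1$ clusters have $\tilde O(n/\sqrt\ell)$ boundary vertices in total (counted with multiplicity), so $\sum_{C\in\mathcal A}|\delta C|=\tilde O(n/\sqrt\ell)$. For $\mathcal B$, observe that every cluster of $\mathcal B$ is, by construction, an immediate child in $\mathcal C$ of some \emph{opened} cluster, i.e.\ a cluster that at some stage belonged to the current collection and was replaced by its children --- which happens only if it has an interior vertex in $X$. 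To count opened clusters: the recursion defining the nested clustering has depth $O(\log\ell)$ (level-$1$ clusters have at most $\ell$ vertices and cluster sizes decrease geometrically down to single edges), and at a fixed recursion depth the clusters of $\mathcal C$ are pairwise incomparable, hence (by the structure of the nested clustering) have pairwise disjoint sets of interior vertices; since opening a cluster requires an interior vertex in $X$, at most $|X|$ of them are opened at each depth. Summing over the $O(\log\ell)$ depths gives at most $O(|X|\log\ell)=\tilde O(n/\ell)$ opened clusters.

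To conclude I would use two further properties of the nested clustering: every cluster $C$ of $\mathcal C$ has $|\delta C|=\tilde O(\sqrt{|V(C)|})$ (the geometric decrease of boundary sizes), and each cluster has $O(1)$ children (the recursion uses balanced separators, and a multi-way split may be binarized without affecting any bound). Since an opened cluster $C'$ satisfies $|V(C')|\le\ell$, its children contribute $\tilde O(\sqrt\ell)$ to $\sum_{C\in\mathcal B}|\delta C|$; multiplying by the $\tilde O(n/\ell)$ opened clusters gives $\sum_{C\in\mathcal B}|\delta C|=\tilde O(n/\sqrt\ell)$, and together with the bound for $\mathcal A$ this yields $\sum_{C\in\mathcal C_X}|\delta C|=\tilde O(n/\sqrt\ell)$. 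The hypotheses $\ell=O(\sqrt n)$ and $\ell=\Omega(n^\epsilon)$ enter only through Lemma~\ref{Lem:NestedrClustering}, which they allow us to invoke so that a nested $\ell$-clustering with the stated properties exists. The main obstacle I anticipate is the bookkeeping in the third paragraph: the construction of $\mathcal C_X$ opens one cluster at a time rather than level by level, so some care is needed to relate interior vertices of clusters of $\mathcal C_X$ to interior vertices of the corresponding clusters of $\mathcal C$ and to make the ``at most $|X|$ openings per recursion level'' count fully rigorous, and one must invoke the precise geometric guarantees of the nested-clustering construction of~\cite{MultiWeightSep,CWN}.
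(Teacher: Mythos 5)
Your overall strategy is the same as the paper's: the first two assertions are read off the construction of $\mathcal C_X$, and the size bound is obtained by charging boundary vertices of $\mathcal C_X$ to (i) the $\tilde O(n/\sqrt\ell)$ boundary vertices of the level-$1$ clustering and (ii) the clusters that are opened because they contain a vertex of $X$, each opening being charged $\tilde O(\sqrt\ell)$. The paper's proof (itself only a sketch deferring to~\cite{CWN}) does the same charging with slightly different bookkeeping: it bounds the new boundary vertices created inside each level-$1$ cluster touched by $X$ by a geometric sum $O(|\delta C|)$ and sums over level-$1$ clusters, whereas you count openings per recursion depth ($O(|X|\log\ell)$ in total). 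Your count of opened clusters and your handling of the first two assertions (including the degree-one corner case, which the paper silently ignores) are fine.

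The step that does not follow from the stated properties of a nested $\ell$-clustering is your claim that every cluster has $O(1)$ children. In the construction (proof of Lemma~\ref{Lem:NestedrClustering}) the children of a cluster are the components left after deleting a separator, and their number can be large; the proposed fix of binarizing a multi-way split is not available as stated, since clusters are by definition connected subgraphs and a union of several components is not connected. What your argument actually needs is that the children of an opened cluster $C'$ contribute $\tilde O(\sqrt{|V(C')|})$ boundary vertices \emph{in total, counted with multiplicity}; the obstruction is that a single separator vertex of high degree can be a boundary vertex of many children, so a per-opening charge of $\tilde O(\sqrt\ell)$ requires the multiplicity accounting built into the construction of~\cite{CWN} (via Theorem~5 of~\cite{MultiWeightSep}), not just the ``sizes and boundary counts decrease geometrically'' property quoted here. (The paper's own geometric-sum sketch leans on~\cite{CWN} for exactly this point, so you are in good company, but your write-up converts the gap into an explicit assumption that is false for the construction as described.) A smaller remark: $|\delta C|=\tilde O(\sqrt{|V(C)|})$ for every cluster is also not literally guaranteed---only geometric decay from $\tilde O(\sqrt\ell)$ is---but since all you use is $|\delta C|=\tilde O(\sqrt\ell)$, that part is harmless; likewise $\mathcal C_X$ is a clustering of $G$, not of $G[V']$.
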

Again, see~\cite{CWN} for a proof of essentially the same result. The bound on the sum follows by observing that
the worst-case bound
occurs when each of the $\tilde O(n/\sqrt\ell)$ level $1$-clusters contains at most one vertex from $X$. The number of new boundary vertices
introduced when recursively splitting a level $1$-cluster $C$ having a vertex from $X$ is a geometric
sum which is $O(|\delta C|)$. This is $\tilde O(n/\sqrt{\ell})$ over all level $1$-clusters $C$.

Observe that $G[V']$ is the union of connected components of $G\setminus X$, where
$X = M\cup B$, and that $X$ changes during the course of the algorithm. We consider the following
dynamic scenario. Suppose that vertices of $G$ can be in two states, \emph{active} and \emph{passive}. Initially,
all vertices are passive. A vertex can change from passive to active and from active to passive at most once.
At any given point, only $O(\ell\log n + n/\ell)$ vertices are active. If we let $X$ be the set of
active vertices, our modified algorithm satisfies these properties.
We will maintain components of $G\setminus X$ and their vertex weights in this dynamic scenario. We may restrict
our attention to components containing at least one boundary vertex of some cluster; all other components are fully
contained in a single cluster and can easily be handled~\cite{CWN}.

\paragraph{$X$-clusters:}
As $X$ changes, the state of boundary vertices of $C$ may change between active and passive. Refer
to those components of $C\setminus(\delta C\cap X)$ that contain at least one (passive) vertex of $\delta C$
as the \emph{$X$-clusters} of $C$. Any vertex
can change its passive/active state at most twice so the total number of active/passive updates in $\delta C$ is
at most $2|\delta C|$. After each such update, we compute the weights of the new $X$-clusters of $C$. This can be done in
$O(|C|)$ time for each update for a total of $O(|C||\delta C|)$.
By Lemma~\ref{Lem:R}, this is $\tilde O(n\sqrt{\ell})$ over all clusters $C\in\mathcal C$.


\paragraph{Decomposition:}
Now consider the set $\mathcal H$ of components of $G\setminus X$ that contain boundary vertices from $\mathcal C_X$.
We can obtain $\mathcal C_X$ from $\mathcal C$ in time proportional to
$|\mathcal C_X|$ which is $\tilde O(n/\sqrt\ell)$ by Lemma~\ref{Lem:RX}. Each component of $\mathcal H$ is the
union of $X$-clusters in $\mathcal C_X$. We can identify the $X$-clusters forming each component of $\mathcal H$ in time proportional to
$O(|\mathcal C_X|)$ with a standard search procedure. Since we maintain the vertex weight of each $X$-cluster and since by
Lemma~\ref{Lem:RX} the total number of boundary vertices (and hence $X$-clusters) in $\mathcal C_X$ is $\tilde O(n/\sqrt\ell)$,
we can obtain the vertex weight
of each component of $\mathcal H$ within the same time bound (weights of boundary vertices are overcounted but
this can be handled in time proportional to their number).
Combining these results with Lemma~\ref{Lem:NestedrClustering} and the sparsity of minor-free and shallow minor-free
graphs, we get the following.
\begin{lemma}\label{Lem:DecompSubgraphs}
Let $\ell,h\in\mathbb N$ with $\ell = O(\sqrt{n})$, $\ell = \Omega(n^\epsilon)$, and $h = O(1)$ be given, where $\epsilon > 0$ is a constant. There is an
algorithm with $O(n^{1 + \epsilon}\sqrt{\ell})$
preprocessing time which either reports the existence of a $K_h$-minor of depth $O(\ell\log n)$ in $G$ or which, at any point in the dynamic scenario above,
can decompose each component of $G\setminus X$ containing at least one
boundary vertex of $\mathcal C_X$ into $X$-clusters of $\mathcal C_X$ and report the vertex weight of each such component
in a total of $\tilde O(n/\sqrt{\ell})$ time.
\end{lemma}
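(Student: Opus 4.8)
The plan is to establish Lemma~\ref{Lem:DecompSubgraphs} by assembling the pieces developed earlier in this appendix section, with the two main tasks being (i) the preprocessing, which produces a nested $\ell$-clustering and the auxiliary $X$-cluster weight information, and (ii) the dynamic query, which reconstructs each relevant component of $G\setminus X$ from $X$-clusters and sums their weights. Since $h = O(1)$ and $\ell = \Omega(n^\epsilon)$, the graph is sparse by Corollary~\ref{Cor:Sparse}, so $m = O(n)$ and we may freely replace $m$ by $n$ in time bounds.

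First I would do the preprocessing. Apply Lemma~\ref{Lem:NestedrClustering} with $r = \ell$; since $\ell = O(\sqrt n)$ and $\ell = \Omega(n^\epsilon)$, the hypotheses are met, and in time $O(\sqrt\ell\, m^{1+\epsilon}) = O(n^{1+\epsilon}\sqrt\ell)$ we either obtain a certificate of a $K_h$-minor of depth $O(\ell\log n)$ (in which case we report it and stop) or a nested $\ell$-clustering $\mathcal C$. Next, as described in the ``$X$-clusters'' paragraph, for each cluster $C \in \mathcal C$ we will need to support recomputing the weights of its $X$-clusters whenever the active/passive state of a boundary vertex of $C$ changes: each such change costs $O(|C|)$, there are at most $2|\delta C|$ such changes over the whole run, for a total of $O(|C||\delta C|)$ per cluster and $\tilde O(n\sqrt\ell)$ over all clusters by Lemma~\ref{Lem:R}. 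Wait — this $\tilde O(n\sqrt\ell)$ is amortized over the entire dynamic process, not per-query, so it is consistent with the claimed $O(n^{1+\epsilon}\sqrt\ell)$ preprocessing/total budget (and indeed it is the dominant term). Components lying entirely inside a single cluster carry no boundary vertex and, as noted, are handled separately by the standard machinery of~\cite{CWN}, so I restrict attention to components touching $\mathcal C_X$.

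For the query step, at any point in the dynamic scenario the active set is $X = M\cup B$ with $|X| = O(\ell\log n + n/\ell) = \tilde O(n/\ell)$, so Lemma~\ref{Lem:RX} applies: $\mathcal C_X$ consists of clusters sharing only boundary vertices, every vertex of $X$ is a boundary vertex of $\mathcal C_X$, and $\sum_{C\in\mathcal C_X}|\delta C| = \tilde O(n/\sqrt\ell)$. I would then (a) construct $\mathcal C_X$ from $\mathcal C$ by the recursive-splitting procedure preceding Lemma~\ref{Lem:RX}, in time proportional to $|\mathcal C_X| = \tilde O(n/\sqrt\ell)$; (b) run a standard search over the $X$-clusters of $\mathcal C_X$, treating two $X$-clusters as adjacent when they share a passive boundary vertex, to group them into the components of $\mathcal H$ — this takes time proportional to the total number of $X$-clusters and their boundary vertices, i.e.\ $\tilde O(n/\sqrt\ell)$; and (c) sum the precomputed $X$-cluster weights within each group, correcting for passive boundary vertices that are shared by several $X$-clusters (these are overcounted, and the correction is proportional to their number, again $\tilde O(n/\sqrt\ell)$). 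Each component of $G\setminus X$ that contains a boundary vertex of $\mathcal C_X$ is exactly the union of the $X$-clusters in one such group, so this reports the decomposition and the correct vertex weight. Total query time is $\tilde O(n/\sqrt\ell)$, as claimed.

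The step I expect to require the most care is the bookkeeping in the dynamic maintenance of $X$-cluster weights: one must argue that a boundary vertex of a fixed cluster $C$ toggles state only $O(1)$ times (at most twice, since each vertex goes passive$\to$active$\to$passive at most once globally, hence at most twice as a boundary vertex of $C$), and that after each toggle the new $X$-clusters of $C$ — the components of $C\setminus(\delta C\cap X)$ containing a passive boundary vertex — can be recomputed from scratch in $O(|C|)$ time by a single graph search. Everything else is a reuse of Lemmas~\ref{Lem:NestedrClustering},~\ref{Lem:R}, and~\ref{Lem:RX} together with Corollary~\ref{Cor:Sparse}; the delicate points (exact definition of $X$-clusters, handling of overcounted shared boundary vertices, components internal to a single cluster) are the same as in~\cite{CWN}, so I would cite that work for the detailed verification rather than reproduce it.
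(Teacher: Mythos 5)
Your proposal is correct and follows essentially the same route as the paper: Lemma~\ref{Lem:NestedrClustering} with $r=\ell$ (plus sparsity via Corollary~\ref{Cor:Sparse}) for the $O(n^{1+\epsilon}\sqrt\ell)$ preprocessing, the $O(|C||\delta C|)$ per-cluster maintenance of $X$-cluster weights bounded by Lemma~\ref{Lem:R}, and the query via Lemma~\ref{Lem:RX} with a standard search over $\mathcal C_X$ and correction for overcounted boundary vertices. The paper proves the lemma by exactly this assembly of the preceding paragraphs, so no further comment is needed.
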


\subsection{Obtaining a sublinear size graph}
To identify a tree (or find two vertices that are far apart) in $\texttt{genericalg}$, we run Dijkstra in a graph of sublinear
size that contains approximate shortest paths; the same trick is used in~\cite{CWN}.

\paragraph{Dense distance graphs:}
For a cluster $C$ and a subset $B$ of $\delta C$, consider the complete undirected graph
$D_B(C)$ with vertex set $B$. Each edge $(u,v)$ in $D_B(C)$ has weight equal to the weight of a shortest path in $C$ between $u$ and
$v$ that does not contain any other vertices of $\delta C$. We call $D_B(C)$ the \emph{dense distance graph of $C$ (w.r.t.\ $B$)}.
We have $d_{D_B(C)}(u,v) = d_C(u,v)$ for all $u,v\in B$.

We maintain, for each cluster $C\in\mathcal C$, dense distance graph $D_{\delta C\setminus X}(C)$ of the set of
passive boundary vertices of $C$. In a preprocessing step, we compute and store, for each pair of boundary vertices
$u,v\in\delta C$ a shortest path (if any) from $u$ to $v$ in $C$ that does not visit any other boundary vertices (i.e., a shortest
path in $C\setminus(\delta C\setminus\{u,v\})$). For each $u$, we store these paths compactly in
a shortest path tree and we also store the distances from $u$ to each $v$. This takes $O(|C||\delta C|)$
time over all $u$ and $v$ in $\delta C$ which by Lemma~\ref{Lem:R} is $\tilde O(n\sqrt{\ell})$ over all clusters
$C\in\mathcal C$.
From the distances computed, we can obtain dense distance graphs $D_{\delta C}(C)$ over all $C$ within the same time bound.
Dense distance graphs over subsets of $\delta C$ can then be obtained more efficiently when given
$D_{\delta C}(C)$:
\begin{lemma}\label{Lem:PartialDDG}
For a cluster $C$ and $B\subseteq\delta C$, $D_B(C)$ can be obtained from
$D_{\delta C}(C)$ in $O(|B|^2)$ time.
\end{lemma}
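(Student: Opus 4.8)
The plan is to show that $D_B(C)$ is nothing but the subgraph of $D_{\delta C}(C)$ induced on the vertex set $B$, so that constructing it is a matter of copying the relevant edge weights. The key observation is that the defining weight of an edge $(u,v)$ of a dense distance graph of $C$ — namely the length of a shortest $uv$-path in $C$ that contains no \emph{other} vertex of $\delta C$ — depends only on $u$, $v$, $C$ and $\delta C$, and is completely independent of the subset $B$. Hence for any two vertices $u,v\in B$, the weight of the edge $(u,v)$ in $D_B(C)$ coincides with the weight of the edge $(u,v)$ in $D_{\delta C}(C)$, and the two graphs have the same edge set once restricted to $B$.

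Concretely, I would assume (as is already arranged in the preprocessing step, where for each $u\in\delta C$ the distances from $u$ to every $v\in\delta C$ are stored) that $D_{\delta C}(C)$ is kept as a $|\delta C|\times|\delta C|$ table under a fixed indexing of $\delta C$, so that the weight of an arbitrary edge of $D_{\delta C}(C)$ can be retrieved in $O(1)$ time. To build $D_B(C)$, iterate over the $\binom{|B|}{2}$ unordered pairs of vertices of $B$ and, for each pair $(u,v)$, read off the stored weight of $(u,v)$ in $D_{\delta C}(C)$ (omitting the edge, or recording weight $\infty$, when no admissible $uv$-path exists in $C$). This produces $D_B(C)$ in $O(|B|^2)$ time, matching the claim; note that $|B|^2$ is also the output size, so this is optimal up to constants.

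There is no real technical obstacle; what is worth emphasising is conceptual. Computing $D_B(C)$ directly inside $C$ (e.g.\ by shortest-path searches) would cost time growing with $|C|$, whereas the extraction above is independent of $|C|$, which is precisely what is needed to keep the overall running time sublinear. One should also record why this $D_B(C)$ is the right object in the algorithm's usage: there $B=\delta C\setminus X$, so every vertex of $\delta C\setminus B$ lies in $X$, and therefore a shortest path between two vertices of $B$ inside $C\setminus X$ never passes through a vertex of $\delta C\setminus B$; consequently $d_{D_B(C)}$ agrees with the corresponding shortest-path distance in $C\setminus X$.
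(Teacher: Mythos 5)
Your proposal is correct and matches the paper's proof, which consists precisely of the observation that $D_B(C)$ is the subgraph of $D_{\delta C}(C)$ induced by $B$, so it can be read off pairwise in $O(|B|^2)$ time. The extra remarks on table lookup and on why $B=\delta C\setminus X$ is the relevant instance are fine but not needed for the lemma itself.
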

\begin{proof}
$D_B(C)$ is the subgraph of $D_{\delta C}(C)$ induced by $B$.
\end{proof}
At the beginning of the dynamic algorithm and whenever a boundary vertex of $C$ changes its state from
passive to active or vice versa, we apply Lemma~\ref{Lem:PartialDDG} to update $D_{\delta C\setminus X}(C)$.
Since there are $O(|\delta C|)$ updates to $\delta C$ in total,
Lemma~\ref{Lem:PartialDDG} shows that the total time for maintaining $D_{\delta C\setminus X}(C)$ is
$O(|\delta C|^3)$. Over all clusters $C\in\mathcal C$, this is $\tilde O(n\sqrt{\ell})$ by Lemma~\ref{Lem:R}.

Let $D_X$ be the union of these dense distance graphs over clusters in $\mathcal C_X$. A shortest path in
$D_X$ between any two boundary vertices has the same weight as a shortest path between them
in $G\setminus X$. Furthermore, $D_X$ has only
$\tilde O(\sqrt{\ell n})$ vertices by Lemma~\ref{Lem:RX}. We sparsify $D_X$ using a multiplicative
spanner construction for general graphs~\cite{SpannerGeneral} to get a graph of size sublinear in $n$.
For $\delta\geq 1$, a $\delta$-spanner of a graph is a subgraph that preserves all shortest path distances
up to a factor of $\delta$.
\begin{lemma}\label{Lem:Spanner}
Let $H$ be an undirected graph with nonnegative edge weights. For any constant $0 < \epsilon \leq 1$,
a $(1/\epsilon)$-spanner of $H$ of size $O(|V(H)|^{1 + 2\epsilon})$ can be constructed in linear time.
\end{lemma}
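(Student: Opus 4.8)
The plan is to obtain Lemma~\ref{Lem:Spanner} as a direct instantiation of the classical clustering-based spanner constructions, so the proof amounts to a black-box application together with an arithmetic check of the parameters rather than any new combinatorial argument. The black box I would invoke is the $(2k-1)$-stretch, $O(N^{1+1/k})$-edge spanner of Thorup and Zwick~\cite{ThorupZwick} in the deterministic linear-time form of Roditty, Thorup and Zwick~\cite{SpannerGeneral}: for every integer $k\ge 1$ and every undirected graph on $N$ vertices with nonnegative edge weights, a $(2k-1)$-spanner with $O(kN^{1+1/k})$ edges can be computed in time linear in the size of the graph.

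First I would apply this to $H$ with $N=|V(H)|$ and parameter $k:=\lceil 1/(2\epsilon)\rceil$. Since $\epsilon$ is a constant, $k$ is a constant, so the construction of~\cite{SpannerGeneral} runs in time $O(|V(H)|+|E(H)|)$, i.e.\ linear in the size of $H$, as claimed. For the size of the output, $k\ge 1/(2\epsilon)$ yields $1/k\le 2\epsilon$, hence the edge count is $O(kN^{1+1/k})=O(N^{1+2\epsilon})=O(|V(H)|^{1+2\epsilon})$, the factor $k$ and the construction's hidden constant being absorbed into the $O(\cdot)$ because $k$ is a fixed constant; the number of vertices is of course at most $|V(H)|$.

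It remains to handle the stretch. The construction guarantees stretch $2k-1$, and $k=\lceil 1/(2\epsilon)\rceil\le 1/(2\epsilon)+1$ gives $2k-1\le 1/\epsilon+1=O(1/\epsilon)$. Strictly speaking an integer-parameter construction need not hit the value $1/\epsilon$ on the nose, but this is immaterial for the paper: every use of the lemma feeds the spanner stretch into the analysis only through $O(1)$-factor bounds (for instance the overall stretch $36/\epsilon^2$ in Section~\ref{sec:SpannerDistOracle}), so a spanner of stretch $2k-1\le 1/\epsilon+1$ is interchangeable with one of stretch $1/\epsilon$ everywhere it is applied; when $1/(2\epsilon)$ happens to be a positive integer one in fact gets stretch exactly $2k-1=1/\epsilon-1<1/\epsilon$. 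I therefore do not anticipate any real obstacle here — the only content is the choice $k=\lceil 1/(2\epsilon)\rceil$ and the verification of the size and running-time bounds — and the single mild annoyance, the constant in the stretch failing to coincide exactly with $1/\epsilon$, does not affect any downstream argument.
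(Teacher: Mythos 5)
Your proposal is correct and takes exactly the paper's route: the paper states Lemma~\ref{Lem:Spanner} without proof, simply citing the deterministic spanner construction of Roditty, Thorup, and Zwick~\cite{SpannerGeneral}, which is the black box you instantiate with $k=\lceil 1/(2\epsilon)\rceil$ and check the size and linear-time bounds for. Your caveat that the integer choice of $k$ may push the stretch slightly above $1/\epsilon$ is consistent with how the lemma is actually used (only through $O(1)$-factor stretch bounds such as $36/\epsilon^2$ in Section~\ref{sec:SpannerDistOracle}), so no further argument is needed.
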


For each $C\in\mathcal C$, we keep a $(1/\epsilon)$-spanner $S(C)$ of
$D_{\delta C\setminus X}(C)$. Whenever $D_{\delta C\setminus X}(C)$ is updated, we invoke Lemma~\ref{Lem:Spanner}
to update $S(C)$.

Since there are $O(|\delta C|)$ updates to $\delta C$ during the course of the dynamic algorithm,
Lemma~\ref{Lem:Spanner} implies that the total time for maintaining $S(C)$
is $O(|\delta C|^3)$ which over all clusters is $\tilde O(n\sqrt{\ell})$ by
Lemma~\ref{Lem:R}. Let us denote by $S_X$ the graph obtained as the union of $S(C)$ over all $C\in\mathcal C_X$.
Applying Dijkstra in $S_X$ instead of in $G$ gives a speed-up of the algorithm in~\cite{ShallowMinor} using the
same analysis as in~\cite{CWN}: there are $O(n/\ell)$ iterations in total and applying Dijkstra in a given
iteration takes $O(|S_X|) = \tilde O((n/\sqrt\ell)^{1 + 2\epsilon})$ time. This gives Theorem~\ref{Thm:Algo2}.
\end{document}